\newcommand{\R}{\mathbb{R}}
\newcommand{\GR}[2][{}]{\ensuremath{\mathbb{R}^{{#1}}_{#2}}}
            \newcommand{\e}{\mathbf {e}_}            \newcommand{\eo}{\e{0}}
\newcommand{\Eu}[1]{\ensuremath{\text{E}({#1})}}
\newcommand{\SE}[1]{\ensuremath{\text{SE}({#1})}}
\newcommand{\gp}{\,}
\newcommand{\inv}[1]{{#1}^{-1}}
\newcommand{\half}{{\mbox{$\frac{1}{2}$}}}
\newcommand{\edg}{E}  \newcommand{\ideal}{I} \newcommand{\origin}{o}
\DeclareMathOperator{\kmag}{mag}
\newcolumntype{C}[1]{>{\centering\arraybackslash}m{#1}}
\newcolumntype{Y}{>{\centering\arraybackslash}X} 
\definecolor{tablegray}{HTML}{F8F8F8} \definecolor{tablegreen}{HTML}{FAFFFA}
\definecolor{tablered}{HTML}{FFFBF8}
\definecolor{tableyellow}{HTML}{FFFFF8}
\definecolor{egreen}{HTML}{009977}
\definecolor{eblue}{HTML}{0099FF}
\definecolor{ered}{HTML}{AA4400}
\newtheorem{theorem}{\bf Theorem}[section]
\newtheorem{corollary}{\bf Corollary}[section]
\newtheorem{definition}{\bf Definition}[section]
\newcommand{\raisedtarget}[1]{  \raisebox{\fontcharht\font`P}[0pt][0pt]{\hypertarget{#1}{}}}
\renewcommand{\vec}[1]{\vv{#1}}
\begin{document}

\title{Clean up your Mesh! \\ \large{Part 1: Plane and simplex}}

\author{Steven De Keninck$^{1}$, Martin Roelfs$^{2}$, Leo Dorst$^{3}$ and David Eelbode$^{4}$}

\address{$^{1,3}$University of Amsterdam\\
$^{2,4}$University of Antwerp}

\subject{Discrete Differential Geometry, Mesh Processing}

\keywords{Projective Geometric Algebra, Moments of $k$-complexes, Jacobi Method}

\corres{Steven De Keninck\\
\email{s.a.j.dekeninck@uva.nl}}

\jname{rsta}

\begin{abstract}
We revisit the geometric foundations of mesh representation through the lens of Plane-based Geometric Algebra (PGA), questioning its efficiency and expressiveness for discrete geometry.  
We find how $k$-simplices (vertices, edges, faces, ...) and $k$-complexes (point clouds, line complexes, meshes, ...) can be written compactly as joins of vertices and their sums, respectively.
We show how a single formula for their $k$-magnitudes (amount, length, area, ...) follows naturally from PGA's Euclidean and Ideal norms.
This idea is then extended to produce unified coordinate-free formulas for classical results such as volume, centre of mass, and moments of inertia for simplices and complexes of arbitrary dimensionality.
Finally we demonstrate the practical use of these ideas on some real-world examples.
\end{abstract}

\begin{fmtext}
\includegraphics[width=0.6\textwidth]{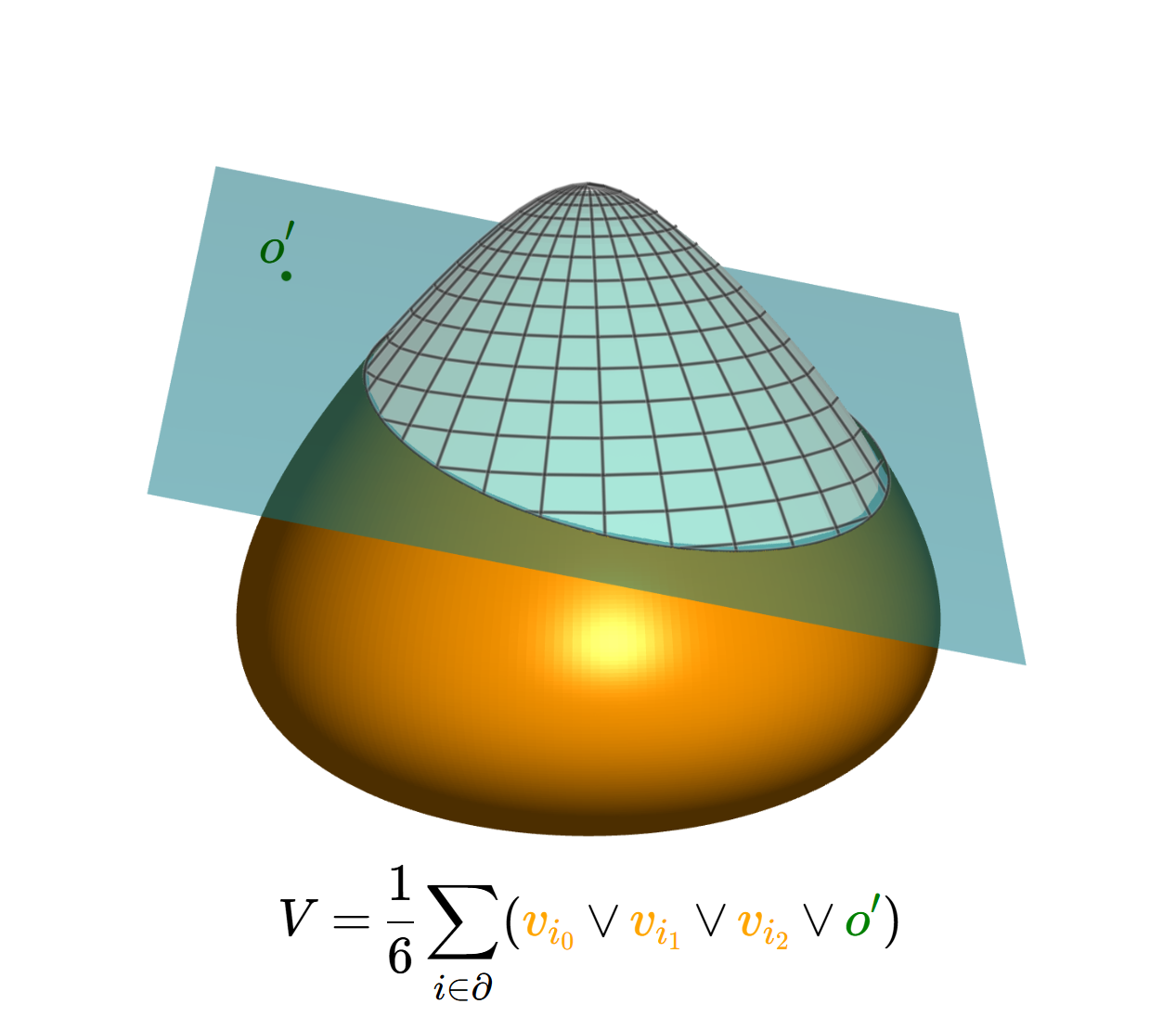}

\end{fmtext}
\maketitle

\section{Introduction}
In the geometric representation of discrete 
3D meshes, 
it sometimes seems that there are as many 3D mesh representations as there are engineering
problems involving them. For each application, a variety of topological
(e.g. half-edge, adjacency matrix, triangle lists, ...), representational
(e.g. position vectors, plane equations, edge vectors, ...), and performance
(e.g. octrees, bsp, bvh, ...) options need to be considered.
In this paper we  focus on the geometric representation, and consider 
the efficacy of the multivectors of PGA, plane-based geometric algebra, for the representation of simplices and complexes. 
We believe that this encompasses most of the others, in a compact, coordinate-free framework.
While we refer the reader to \cite{Dorst2007GeometricScience,Gunn2019GeometricGraphics,DeKeninck2024Normalization6D} for 
a more complete introduction to PGA, we shall briefly 
review the basics used in this article. 

The main contributions of this paper are formulas for the 0th (amount, length, area, ...),
1st (center of mass) and 2nd moments (inertia) of arbitrary $k$-simplices and complexes.
An overview of these results is given in \vref{table:cheat},
where the header compactly summarizes the main results to be proven in this work, while the rows present specific formulas that can be implemented directly using any modern GA library that supports PGA, e.g. \cite{gamphetamine,kingdon,python_clifford,ganja}.
The methodology distinguishes itself from previous methods using exterior calculus \cite{Crane:2013:DGP} or GA \cite{GA4Ph} through its inclusion of ideal elements (residing at infinity, but finitely represented in this projective algebra), which endows PGA with a Euclidean and ideal norm, neatly corresponding to magnitudes of simplices and their boundaries.
Moreover, these quantities are naturally invariant under the transformations of the Euclidean group \Eu{n}.

This paper is concerned with motivating and proving these results, with important geometric intuitions being given throughout. 
We show an involved practical example of measuring the volume of an airplane tank mesh, where PGA provides a very compact solution. 
\begin{table}[b]
\footnotesize
\fontfamily{cmss}\selectfont
\renewcommand{\arraystretch}{1.35}
\setlength{\tabcolsep}{0pt}
\centering
\rowcolors{1}{white}{tablegray} 
\begin{tabular}{|@{}p{0.5\textwidth}p{0.5\textwidth}@{}|}
\Xhline{0.2pt}
\multicolumn{2}{c}{
\begin{tabular}{|>{\centering\arraybackslash}p{0.24\textwidth}>{\centering\arraybackslash}p{0.24\textwidth}>{\centering\arraybackslash}p{0.24\textwidth}>
{\centering\arraybackslash}p{0.28\textwidth}|}
 $k$-simplex & 
 $k$-complex &
 $k$-magnitude & $k$-com\\
 $S_k = v_0 \vee ... \vee v_k$ & 
 {$C_k = \sum S_k$} & 
 {$\frac {1} {k!} \lVert S_k \rVert \, , \, \frac {1} {k!} \lVert C_{k-1} \rVert_\infty$} & 
 {$ \frac{ \sum_{\partial}(\sum v_i + o)(S_{k-1}\vee o)}{(k+1)!}$} \\
 \tiny point, edge, triangle, tetrahedron, ... & \tiny point set, polygon, mesh, ... & \tiny amount, length, area, volume, ... & \tiny center of mass\\
\end{tabular}
}\\
\arrayrulecolor{tablegray}
\Xhline{0.02pt}
\,\,Vertex $v$                       & $(x \mathbf{e}_1 + y \mathbf{e}_2 + z \mathbf{e}_3 + \mathbf{e}_0)^*$ \\
\Xhline{0.02pt}
\,\,Edge $\edg$                   & $v_1 \vee v_2$ \\
\Xhline{0.02pt}
\,\,Triangle $F$                     & $v_1 \vee v_2 \vee v_3$ \\
\Xhline{0.02pt}
\,\,Tetrahedron $T$                  & $v_0 \vee v_1 \vee v_2 \vee v_3$ \\
\Xhline{0.02pt}
\Xhline{0.02pt}
\,\,Vertex count                   & $\lVert \sum v_i \rVert$ \\
\Xhline{0.02pt}
\,\,Edge length                   & $\lVert \edg \rVert$ \\
\Xhline{0.02pt}
\,\,Face area                        & $\frac{1}{2}\lVert F \rVert$ \\
\Xhline{0.02pt}
\,\,Tetrahedron volume               & $\frac{1}{6}\lVert T \rVert$ \\
\Xhline{0.02pt}
\,\,Planar polygon area (via $\partial\edg$) & $\frac{1}{2}\lVert \sum \edg_i \rVert_\infty = \frac {1} {2} \sum (E_i \vee \origin) $ \\
\Xhline{0.02pt}
\,\,Mesh area (via $\partial F$)     & $\frac{1}{2}\sum \lVert F_i \rVert$ \\
\Xhline{0.02pt}
\,\,Mesh volume (via $\partial F$)   & $\frac{1}{6}\lVert \sum F_i \rVert_\infty = \frac {1} {6} \sum (F_i \vee \origin)$ \\
\Xhline{0.02pt}
\,\,Area of sum of missing {$\partial F$} & $\frac{1}{2}\lVert \sum F_i \rVert$ \\
\Xhline{0.02pt}
\,\,Mesh center of mass (via $\partial F$) & $\tfrac{1}{24}\sum (v_{i_1}+v_{i_2}+v_{i_3}+\origin)(F_i \vee \origin)$ \\
\arrayrulecolor{black}
\Xhline{0.2pt}
\end{tabular}
\vspace{3pt}
\caption{Magnitudes of simplices and complexes.  }
\vspace{3pt}
\label{table:cheat}
\end{table}
\noindent
\Cref{table:cheat} shows that the magnitudes of complexes of various dimensions follow a straightforward pattern when expressed in PGA.

\section{The Geometric View of Clifford Algebras}

This paper shows a use of 3D PGA $\R_{3,0,1}$, Plane-based Geometric Algebra, which is a novel model of Euclidean geometry \cite{Gunn2019GeometricGraphics}.
As a geometric algebra, it is an interpretation of Clifford algebra, which mathematically might be described as a tensor algebra modulo the ideal $x \otimes x - Q(x)$, with $Q(\gp)$ the chosen quadratic form (this effectively states that any square of a vector $x$ can be replaced by  the scalar $Q(x)$). However, Clifford algebra's  geometric interpretation is more directly explained by introducing an equivalent `geometric algebra' (GA) through elementary symmetries.

In the practice of GA, given the desire to work in a geometry (such as the Euclidean geometry of this paper), we set up a vector space with a quadratic form (metric) such that the elementary symmetry operations are representable as orthogonal transformations of the representational space. This is advantageous since GA has an effective representation of orthogonal transformations, by spinor-like elements called $k$-reflections. Let us show briefly how this geometric view indeed produces a practical Clifford algebra.

\subsection{Transformations as $k$-reflections}
According to the Cartan-Dieudonn\'e theorem, all orthogonal transformations in a $d$-dimensional space  $W$ with quadratic form $Q$ can be written as at most $d$ reflections in hyperplanes of $W$. So let us consider a hyperplane reflection of a vector $x \mapsto M(x)$, in a hyperplane through the origin with (non-null) normal vector $a$. From linear algebra, we easily derive the expression 
\[
M(x) = x - 2 (x \cdot a) a /(a \cdot a), 
\] 
where we introduced the dot product notation $a \cdot a \equiv Q(a)$ (which by polarization gives $a \cdot b = \half (Q(a+b) - Q(a) - Q(b))$).

In geometric algebra, we consider the dot product of vectors as the symmetric part of the {\em geometric product} (denoted by a half-space), $a \cdot b = \half (a \, b + b \, a)$. When we substitute this into $M(x)$, we get a more compact expression if we assume this product to be bilinear, distributive, associative (but not commutative), commuting with scalars:\footnote{... which makes this geometric product in fact identical to the Clifford product from the formal tensor-based Clifford algebra definition.}
\begin{equation}
	M(x) 
	= x - 2 (x \cdot a) \gp a / (a\cdot a) 
	= x - (x \gp a + a \gp x) \gp a / a^2 
	= x - x  - a \gp x \gp a / a^2 
	= - a \gp x \gp \inv{a}.
\end{equation}
In this final form, the multiple reflections of Cartan-Dieudonn\'e are easily performed. Two consecutive reflections in planes $a$ and $b$ lead to the $x  \mapsto b \gp a \gp x \gp \inv{a} \gp \inv{b} = (ab) \gp x \gp \inv{(ab)}$. Therefore the element $(ab)$ of the geometric algebra denotes 
a double reflection -- in the 3D algebra $\R_3$, this would be a rotation around the origin (and in fact isomorphic to a quaternion).

For a $d$-dimensional space $W$, its orthogonal transformations can thus be represented as elements consisting of the geometric product of $k$ (which is at most $d$) invertible vectors, called a {\em $k$-reflection}. Such a $k$-reflection $V$ is to be applied to a vector $x$ by the `sandwiching' product $x \mapsto  (-1)^k V\gp x \gp \inv{V}$. We will extend this to general elements below.

\subsection{Choosing a Geometric Algebra}
By a sensible choice of representational space, we can represent a variety of geometries. For the Euclidean geometry of this paper, we need to represent rotations and translations (and their combinations as screw motions) as multiple reflections. Since these operators can be made as multiple planar reflections, the GA approach dictates that we choose a representational space whose vectors represent planes in Euclidean space. 

The familiar homogeneous plane equation $a x + by + cz + d = 0$ involves the homogeneous 4-D `plane-vector' $[a,b,c,d]^T$. The space of such vectors is the representational space of 3D PGA, plane-based geometric algebra.

We need to establish a metric so that the multiple reflections correctly represent the Euclidean transformations, and it turns out that this is $\GR{3,0,1}$, 
This signature $(3,0,1)$ implies that we can make an orthogonal basis $\{\e{0},\e{1}, \e{2}, \e{3}\}$ such that:
\begin{align*}
\e{i} \cdot \e{i} = 1 &\mbox{~~~~for $i=1,2,3$} \\
\e{i} \cdot \eo = 0 & \mbox{~~~~for $i=0,1,2,3$} \\
\e{i} \cdot \e{j} = 0 &\mbox{~~~~for $i,j=0,1,2,3$ and $i\neq j$} 
\end{align*}
\noindent
So $\e{1}, \e{2}, \e{3}$ form an orthogonal basis of three Euclidean basis vectors and $\e{0}$ is an orthogonal null vector for which $\eo^2=0$. 
Using this basis, we would represent the plane with Euclidean unit normal vector $\vec{n}$ at signed distance $\delta$ from the origin, which has the equation $\vec n \cdot \vec x - \delta =0$, as a multiple of the PGA vector $\vec n - \delta \eo$.

The elements obtained by multiplying such vectors can be used to algebraically represent Euclidean transformations, and with an even number of factors are isomorphic to the dual quaternions, viewed as multiple reflections in planes.

For instance, the 2-reflection in two parallel planes $p_1 = \vec n - \delta_1 \e{0}$ and $p_2 = \vec n - \delta_2 \e{0}$ (with $\vec n^2 = 1$) produces the PGA element
\[
p_2 \gp p_2 = (\vec n - \delta_2 {\eo})\gp (\vec n - \delta_1 {\eo}) = 1 - (\delta_2-\delta_1) {\eo} \vec n
\equiv 1 - \eo \vec t,
\]
where we defined $\vec t \equiv (\delta_2-\delta_1) \vec n$, the distance vector between the planes. Then applying the element $p_2p_1$ in a sandwich product to a plane $\vec m$ at the origin gives:
\[
(1 -  \eo \vec t) \gp \vec m \gp (1 + \eo \vec t) 
= \vec  m - 2(\vec m \cdot \vec t) \gp \eo,
\]
which is the vector representing the $\vec m$-plane translated over $2\vec t$. This is indeed what we would expect from a double reflection in parallel planes. The product between two non-parallel planes produces a rotation operator, over twice their relative angle and with their common line as its axis (see e.g.~\cite{GSG}). The product of four (or more) vectors then encodes a general 3D Euclidean motion.

\subsection{Derived Products and their Semantics}
For this geometric use of Clifford algebra, it is convenient to define some more `products' derived from the fundamental geometric product.
We have seen how the dot product of two vectors is the symmetric part of their geometric product. Their anti-symmetric part is the {\em wedge product} $\wedge$:
\[
	a \gp b = a \cdot b + a \wedge b = \half(a \gp b + b \gp a) + \half(a \gp b - b \gp a).
\]
This wedge product establishes a Grassmann subalgebra within the geometric algebra. It can be extended by antisymmetry to multiple factors, and then forms elements called $k$-blades. In PGA, the $k$-blade $a \wedge b$ of two plane vectors $a$ and $b$ is a 2-blade representing their intersection (though note that $a \wedge a =0$, it is a linearized form of the geometric intersection operation). It is often called the `{\em meet product}' when used in this manner. 
We can make geometrical points in 3D PGA as 3-blades, since they are the intersection of 3 planes. 
The element $\e{0123} \equiv \eo \wedge\e1 \wedge \e2 \wedge \e3$ is chosen as the {\em pseudoscalar} of $\R_{3,0,1}$; it anti-commutes with all vectors.

The pseudoscalar can be used to define a (Hodge) duality in PGA, which we will denote by ${}^*$ (the exact definition will follow later in \cref{sec split}\ref {subsection_dual}). It associates, in a linear and invertible manner, a $(4-k)$-blade with each $k$-blade. The dual of the 3-blade point from PGA then becomes a 1-vector, in a more recognizable form reminiscent of the usual homogeneous coordinate representation. With this Hodge duality, we can also define the {\em join product} (aka regressive product) :
\begin{equation}\label{joindef}
	{(A \vee B)}^* = {A}^* \wedge {B}^*.
\end{equation}
In 3D PGA, this product allows two 3-blade points to be joined to form a 2-blade line connecting them. We will show the explicit parametrization (and its connection to Pl\"ucker coordinates) later in \cref{sec split}\ref{subsection_wedge},\ref{subsection_vee}.

The dot product can also be extended to work on multivectors, in a manner that obeys its quantitative duality with the wedge product. This is a bit involved (see e.g. \cite{Dorst2007GeometricScience}),  and since in PGA it is more natural to use the join instead, we will forego the details.

Ultimately, an element we can produce using these derived products and their sums is a {\em multivector}, i.e., a weighted sum of geometric products of basis vectors in the $2^4$-dimensional basis of $\R_{3,0,1}$. 
Algebraically, any such multivector $X$ transforms in exactly the same way (i.e., equivariantly) under $2k$-reflections, namely as the derived product sometimes called {\em sandwiching}:
\[
X \mapsto  V \gp X \gp \inv{V}.
\]
This is easy to see: linearity of the geometric product allows us to look at the basis blades only. A term like $\e1 \e2$ would be sandwiched to $V (\e1 \e2)\inv{V} = V \e1 (\inv{V} V) \gp \e2 \inv{V} = ((-1)^{2k} {V} \e1 \inv{V}) ((-1)^{2k}{V} \e2 \inv{}V)$, which is indeed the geometric product of the transformed basis vectors. This argument extends trivially to any number of basis vector factors. 
In PGA, these $2k$-reflections (aka as `even' transformations) represent rotations , translations and their compositions, and they are all we need in this paper (no net reflections). \footnote{For general $k$-reflections, we can split $X = X_+ + X_-$ in its even grade parts $X_+$ and its odd grade part $X_-$. Then the transformation formula is: $X \mapsto V X_+\inv{V} + \hat{V} X_- \inv{V}$. Incidentally, our derived products only produce all even or all odd multivectors when applied successively to vectors. }

The equivariance property of the GA $k$-reflection representation of transformations makes it superior to the matrix representation of linear algebra,
and it has been used to great advantage (most recently in Machine Learning \cite{pmlr-v202-ruhe23a,flashclifford2025}).

\subsection{Coordinate-free Constructions}

In 3D PGA, a plane, line, point, reflection, rotation, translation, transflection, rotoreflection
and screw are thus each written as an appropriate multivector
\[
X 
\]
without the need for coefficients, indices (abstract or otherwise), or augmentation
to indicate co- or contravariant transformation rules. 
All sensible (i.e., equivariant)  geometrical operations between such elements can be specified and performed at this level.  
\begin{itemize}
\item An (all even or all odd) multivector $X$ transforms
under the action of a normalized product of $k$ vectors $V$, universally as $\pm V X \inv{V}$.
\item
The transformation
between any same grade types $A$ and $B$ is always $\sqrt{BA^{-1}}$. That $2k$-reflection transforms $A$ to $B$ when used in a sandwich product on $A$. 
\item 
Any $B$ can be projected onto any invertible $A$
to produce $(B\cdot A)A^{-1}$. 
\item 
The unique element at the intersection of blades $A$ and $B$ is always its {\em meet}
$A \wedge B$, while the {\em join} $A \vee B$ produces the unique element that contains both. (With the caveat that degeneracy may produce a zero result, see \cite{PGA4CS}.)
\item 
The {\em logarithm}
of any $2k$-reflection $\log V = \edg$  produces a bivector, and conversely the
{\em exponential} of a bivector $V = \exp(E)$ a $2k$-reflection. This ties directly into the Lie algebra of the motions.
In 3D PGA, the logarithm of a simple rotation is its axis; for a translation that axis is of the form $\eo \vec t$, and resides in the plane at infinity. The logarithm of a general 3D Euclidean motion is its screw bivector.
\end{itemize}
In this paper, we stay at the multivector level of specification as much as we can, but will occasionally derive the classical coordinate expression to show the correspondence.

\section{Euclidean Primitives and their Split}\label{sec split}
\Cref{table:basis} shows the geometric interpretation of the relevant multivectors of PGA: those of single grades ($k$-blades and $k$-vectors) and elements with only even or only odd grades. 

\begin{table}[ht]
    \footnotesize
  \fontfamily{cmss}\selectfont
  \renewcommand{\arraystretch}{1.3}
  \setlength{\tabcolsep}{1pt}
  \centering
  \begin{adjustbox}{center,trim={0 0 0 0}, frame}
  \begin{tabularx}{\textwidth}{|>{\centering\arraybackslash}m{.3cm}|Y|YYY|YYY|Y|YYYY|YYYY|}       \Xhline{0.4pt}
        &
    \multicolumn{1}{|c|}{\textbf{s}} &
    \multicolumn{6}{c|}{\textbf{bivector}} &
    \multicolumn{1}{c|}{\textbf{pss}} &
    \multicolumn{4}{|c|}{\textbf{vector}} &
    \multicolumn{4}{c|}{\textbf{trivector}} \\[0.2pt]
    \Xcline{2-17}{0.4pt}
        \multirow{3}{*}{\rotatebox{90}{\scriptsize elements}} &
    \multicolumn{1}{|>{\centering\arraybackslash}m{.5cm}|}{$\color{egreen}1$ } &
    $\color{ered}\mathbf{e}_{23}$ & $\color{ered}\mathbf{e}_{31}$ & $\color{ered}\mathbf{e}_{12}$ &
    $\color{eblue}\mathbf{e}_{01}$ & $\color{eblue}\mathbf{e}_{02}$ & $\color{eblue}\mathbf{e}_{03}$ &
    $\color{eblue}\mathbf{e}_{0123}$ &
    \multicolumn{1}{|c}{$\color{egreen}\mathbf{e}_1$} & $\color{egreen}\mathbf{e}_2$ & $\color{egreen}\mathbf{e}_3$ & $\color{eblue}\mathbf{e}_0$ &
    $\color{eblue}\mathbf{e}_{032}$ & $\color{eblue}\mathbf{e}_{013}$ &
    $\color{eblue}\mathbf{e}_{021}$ & $\color{ered}\mathbf{e}_{123}$ \\[0.2pt]
    \Xcline{2-17}{0.2pt}
        &
    \multicolumn{1}{|c|}{\cellcolor{white}} &
    \multicolumn{3}{c|}{\cellcolor{tablegreen} line$_o \quad \scriptstyle \mathfrak {so}(3)$} &
    \multicolumn{3}{c|}{\cellcolor{tablered} line$_\infty \quad \scriptstyle \mathfrak {t}(3)$} &
    \multicolumn{1}{c|}{\cellcolor{white}} &
    \multicolumn{4}{|c|}{\cellcolor{white}{plane}} &
    \multicolumn{4}{c|}{\cellcolor{white}{point / direction}} \\[0.2pt]
    \Xcline{3-8}{0.2pt}
    &
    \multicolumn{1}{|c|}{} &
    \multicolumn{6}{c|}{\cellcolor{tableyellow} line / screw \quad  $\scriptstyle \mathfrak {se}(3)$} &
    \multicolumn{1}{c|}{} &
    \multicolumn{4}{|c|}{$\scriptstyle a\mathbf e_1 + b\mathbf e_2 + c\mathbf e_3 + d\mathbf e_0$} &
    \multicolumn{4}{c|}{$\scriptstyle (x\mathbf e_1 + y\mathbf e_2 + z\mathbf e_3 + w\mathbf e_0)^*$} \\[0.2pt]
    \Xhline{0.4pt}
            \multirow{4}{*}{\rotatebox{90}{\scriptsize transformations}} &
    \multicolumn{4}{|c|}{\cellcolor{tablegreen} quaternion $\quad \scriptstyle SO(3)$}  &
    \multicolumn{3}{c|}{\cellcolor{tablered}$\,\,$translation $\, \scriptstyle{{T(3)}}\,\,$} &
    \multicolumn{1}{c|}{\cellcolor{white}} &
    \multicolumn{4}{|c|}{\cellcolor{white}\multirow{2}{*}{plane-reflection}} &
    \multicolumn{4}{c|}{\cellcolor{white}\multirow{2}{*}{point-reflection}} \\[0.2pt]
    \Xcline{2-9}{0.2pt}
    &
    \multicolumn{8}{|c|}{\cellcolor{tableyellow} dual quaternion $\quad \scriptstyle{{SE(3)}}$} &
    \multicolumn{4}{|c|}{\tiny $E(3)$} &
    \multicolumn{4}{c|}{\tiny $E(3)$} \\[0.4pt]
    \Xcline{2-17}{0.4pt}
    &
    \multicolumn{7}{|c|}{2-reflection} &
    \multicolumn{1}{c|}{} &
    \multicolumn{4}{|c|}{1-reflection} &
    \multicolumn{4}{c|}{}\\[0.2pt]
    \Xcline{2-17}{0.2pt}
    &
    \multicolumn{8}{|c|}{4-reflection\quad even multivectors\quad $\mathbb R_{3,0,1}^+$} &
    \multicolumn{8}{|c|}{3-reflection\quad odd multivectors\quad $\mathbb R_{3,0,1}^-$}\\[0.2pt]
        \Xhline{0.2pt}
      \end{tabularx}
  \end{adjustbox}
  \vspace{5pt}
  \caption{$\mathbb R_{3,0,1}$ basis, elements, transformations. Here "s" denotes scalar, "pss" denotes pseudoscalar. Subscript "o" means at the (arbitrary) origin, "$\infty$" at infinity (or ideal). }
  \label{table:basis}
\end{table}

\subsection{The Euclidean Split and Dualization}\label{subsection_dual}
PGA is thus a projective way of modeling Euclidean geometry, departing from its planes modelled as vectors. Its null basis vector $\e{0}$ can be interpreted as the {\em plane at infinity}, also know as the {\em ideal plane}. Rotations around an ideal line in this ideal plane model translations, which to us Euclidean beings feel rather different from regular rotations around a finite axis. Also, finite points feel different from directions, which in PGA are modelled as ideal points (i.e., points at infinity, on the ideal plane).

When we establish useful formulas, this distinction between the `finite' and the `ideal' is important, even though such elements may be united within a given PGA multivector to represent proper geometry. So let us introduce 
the {\em Euclidean split}, breaking up our multivectors into Euclidean and Ideal terms:
    \begin{equation}\label{split}
    A = A_E + \mathbf e_0 A_\ideal \ ,
    \end{equation}
    where the {\em Euclidean term} is $A_E$ and the Ideal term contains  $A_\ideal$, which we will refer to as the {\em Ideal factor}.
    Both $A_E$ and $A_I$ are elements of the subalgebra $\mathbb R_3$.
\noindent
In an implementation, it is easy to make this split: one just selects which parts are given on the $2^3$-dimensional Euclidean basis (involving only the vectors $\e{1}\e{2},\e{3}$ and their products), and which on the $2^3$ basis elements also having a factor $\e{0}$.

For the definition of the join, we need a PGA dualization operation.
The {\em Hodge dual} (aka right compliment dual) of any basis-blade $\e{J}$ with $J$ a multi-index is defined to satisfy $\e{J} \gp \e{J}^* = \e{0123}$.  By demanding linearity, this determines the Hodge dual of any multivector.

Using the Euclidean split, the Hodge (un)dual of a general multivector $A$ can be written using the {\em grade involution} (hat\footnote{$\widehat X$, the grade involution of $X$, inverts the sign of the odd grade parts. Note that $\widehat{A B} = \widehat{A} \widehat{B}$.}), the {\em reversion} (tilde\footnote{$\widetilde X$, the {\em reversion} of $X$, inverts the sign of grades 2 and 3 (modulo 4). Note that $\widetilde{A B} = \widetilde{B} \widetilde{A}$.}) and the Euclidean 3D
pseudo-scalar $I_3 = \mathbf e_{123}$ as
\begin{equation}\label{dual}
A^* = \widetilde{A_\ideal}\gp I_3 + \mathbf e_0\gp \widehat{\widetilde{A_E}} \gp I_3, \qquad A^{-*}  = - \widehat{\widetilde{A_\ideal}}\gp I_3 + \eo \gp \widetilde{A}_E \gp I_3 ,
\end{equation}
where $A^{-*}$ denotes the undual.
One may recognize the Euclidean duals of $A_E$ and $A_\ideal$ in these expressions.
We can trivially verify that $(A^{*})^{-*} = (A^{-*})^* = A$.
\begin{proof}
To demonstrate that $A^{*}$ of \cref{dual} is indeed the
right complement dual in PGA, it suffices to show that $ AA^{*} = A^{-*} A = \mathbf e_{0123}$, 
for all basis blades $A=\e{J}$ of $\R_{3,0,1}$.
This can be done by direct computation, when one remembers that for basis blades we have $A_E A_\ideal =0$, since they are either purely Euclidean or purely Ideal.
\end{proof}
We can now give the split definition of the PGA meet and join, reducing them to computations in the Euclidean subalgebra $\R_3$:
\begin{align}
    (A_E + \e0 \gp A_I) \wedge (B_E + \eo \gp B_I)
    &= (A_E \wedge B_E) + \eo \gp (\widehat{A_E} \wedge B_I + A_I \wedge B_E)
\end{align}
and 
\begin{align}
    \lefteqn{(A_E + \e0 \gp A_I) \vee (B_E + \eo \gp B_I)} \nonumber \\
                  &= 
   \big( ( \widehat{B_I} I_3) \wedge (A_E I_3) \big) \gp I_3^{-1}
   -\big( (B_E I_3) \wedge (A_I I_3) \big) \gp I_3^{-1}  
   - \e{0} \gp \big( ( B_I I_3) \wedge (A_I I_3)\big) \gp I_3^{-1}.
    \end{align}
The terms in the result are effectively joins in $\R_3$, see \cref{joindef} and also \cite{Dorst2007GeometricScience}.

\subsection{Norm and Ideal Norm.}\label{subsection_norms}

The Euclidean split \cref{split} leads us to define two useful PGA norms. The Euclidean norm of a PGA element $A$ is equal to the norm of its Euclidean term $A_E$ and  given simply by 
\[
\lVert A \rVert \equiv \sqrt{\tilde AA} = \lVert A_E \rVert.
\]
Because $\e0^2=0$, the $A_\ideal$ factor does not impact the Euclidean norm. We will however be interested in the magnitude of $A_\ideal$, so we define the ideal norm as the norm of that ideal factor, and give three equivalent alternatives to compute it:
\[
\lVert A \rVert_\infty \equiv 
\lVert A_\ideal \rVert  = \lVert   A  \vee \origin \rVert = \lVert A^* \rVert.
\]
Here $\origin$ is our notation for the point at the origin $\origin = \eo^* = I_3$. 
Using $\origin$, the ideal factor $A_\ideal$ may be extracted from $A$ as $A \vee \origin = \origin \vee \widehat{A}$ (since $A_E \vee \e{0}^* + (\e{0} A_\ideal) \vee \e{0}^* = A_\ideal$), and we will often rewrite it in this form.

Note that in Euclidean PGA, both these norms are
positive semi-definite and hence cannot be used to extract signed magnitudes.
An exception is when $ A \vee \origin $ is a scalar, this scalar then representing the signed ideal magnitude (this happens when $A$ is a $1$-vector, i.e., represents a hyperplane).  In that case, we use the signed expression $A \vee \origin$ instead of $\norm{A}_\infty$. 

We can normalize elements through dividing them by the appropriate norm. A general element with non-zero Euclidean term is usually normalized by the Euclidean norm, and we denote this normalization by an overline. So for a plane $p = \alpha (\vec n - \delta \eo)$ with ${\vec n}^2=1$, we have $\overline{p} = \vec n - \delta \eo$ (and note that then $\overline{p}^2 = {\vec n}^2 = 1$). A point $v$ at location $\vec v$ is normalized to the form $\overline{v} = I_3 - \eo \vec{v} I_3$, and ${\vec v}^2=-1$, as you would expect from a point reflection operator in 3D. We will mostly work with normalized points and vertices in this paper, so to unclutter the formulas we will drop the overline for them. 
If an element has only an ideal term, we would normalize it by the infinity norm.

\subsection{From Planes to Lines and Points}\label{subsection_wedge}

In this section we show how in PGA the outer product of planes generates lines and points as blades, and that the Euclidean split of those blades contains useful and recognizable information on relevant magnitudes, in preparation for their use in mesh processing.

As we have seen, a homogeneous linear equation representing a plane is embedded as a PGA vector:
\begin{equation}\label{plane}
ax + by + cz + d = 0 \stackrel{\text{embed}}{\implies} p = a\mathbf e_1 + b\mathbf e_2 + c\mathbf e_3 + d\mathbf e_0
\stackrel{\text{split}}{\implies} \vec n + \mathbf e_0 d
\end{equation}
where the Euclidean split in the last rewrite exposes the normal vector $\vec n = a\mathbf e_1 + b\mathbf e_2 + c\mathbf e_3$ and $-d$, the plane's signed distance
from the origin (scaled with the norm $\sqrt{a^2+b^2+c^2}$ of the normal vector). The outer product of two such planes in Euclidean split form,
\begin{equation}\label{wedge}
\begin{aligned}
p_1 \wedge p_2 = (\vec n_1 + \mathbf e_0 d_1) \wedge (\vec n_2 + \mathbf e_0 d_2) &= 
\vec n_1 \wedge \vec n_2 + \mathbf e_0(d_1\vec n_2 - d_2\vec n_1)\\\end{aligned}
\end{equation}
reveals, on a bivector basis, the well known cross product components $[ \vec n_1 \times \vec n_2\,;\, d_1\vec n_2 - d_2\vec n_1 ]$ for the Pl\"ucker coordinates of their intersection 
line. The norm and ideal norm for such a line
\begin{equation}\label{wedgenorm}
\begin{aligned}
\lVert p_1 \wedge p_2 \rVert &= \lVert \vec n_1 \wedge \vec n_2 \rVert = \lVert \vec n_1 \rVert \lVert \vec n_2 \rVert \sin \theta \\
\lVert p_1 \wedge p_2 \rVert_\infty &= \lVert d_1\vec n_2 - d_2\vec n_1\rVert\\
\end{aligned}
\end{equation}
encode the relevant angles and distances. For normalized planes (with $p_1^2 = p_2^2 =1$), the Euclidean norm 
produces the sine of the angle between the planes, while the ideal norm produces the 
distance between the planes for parallel planes, and the absolute distance from the line to 
the origin otherwise.

We can construct a point 3-vector $v$ at a location $\vec{v}$ by intersection the 3 coordinate planes $p_i = \e{i} - v_i\eo$ specifying the point coordinates $v_i$. This results in 
\begin{equation} \label{point}
v = \e{123} - \eo (v_1 \e{23} + v_2 \e{31} + v_3 \e{12})
= \e{123} - \eo \vec{v}\e{123}
= (\e0 + \vec{v})^*. ~~
\end{equation}
\noindent
The result clearly shows how the usual homogeneous point representation of the parametrized form $(1, \vec{v})$ embeds naturally into PGA through the Hodge dual.

In PGA, the numerical representation of a vertex does not depend on the choice of origin $\origin$. For if we choose as origin instead of  $\origin = \e{123} = I_3$ a point $q = I_3 -\eo \vec{q} I_3 $ at location $\vec{q}$, the vertex $v$ relative to that new location would have position vector $\vec{v}-\vec{q}$. Vertex $v$ would therefore be constructed as $v = (I_3 -\eo \vec{q} I_3) - \e0 (\vec{v} -\vec{q}) I_3 =
I_3 -\eo \vec{v} I_3$, i.e., it would be literally unchanged. Specifying vertices numerically in PGA notation therefore does {\em not} mean that we have chosen a fixed origin: the vertex's PGA `coordinates'  can be used relative to any origin. The corresponding relative position vector appears automatically as one performs such an `origin split' on the invariant vector $v$.

Using this representation of a point, it is easy to verify that $v \vee p =0$ for a plane $p = \vec n - \delta \eo$ indeed reproduces the normal plane equation $\vec v \cdot \vec n -\delta = 0$.

\subsection{From Points to Lines and Planes}\label{subsection_vee}

The PGA {\em join} of points generates lines and planes as blades. Again, the Euclidean split of those blades contains useful and recognizable information on relevant magnitudes, in which we recognize common expressions.
The involved coordinate expressions commonly used to compute those are therefore just retrievable aspects of the join of PGA points.

A normalized point $v$, at Euclidean position vector $\vec v = x\mathbf e_1 + y\mathbf e_2 + z\mathbf e_3$ 
is embedded as a PGA dual vector. i.e., a 3-blade
\begin{equation}
\begin{bmatrix}x \\ y \\ z\end{bmatrix} \stackrel{\text{embed}}{\implies} v =  (\eo + \vec v)^* \stackrel{\text{split}}{\implies} I_3 - \mathbf e_0\vec v I_3.
\end{equation}
The join between two points $v_0, v_1$ (assumed normalized unless otherwise mentioned) produces the carrier line connecting them:
\begin{equation}\label{join}
\begin{aligned}
v_0 \vee v_1 &= (\vec v_1 - \vec v_0)I_3 + \mathbf e_0{ (\vec v_1 \wedge \vec v_0) I_3}.
\end{aligned}
\end{equation}
This expression on a 6-dimensional 2-vector basis is directly related to the Pl\"ucker coordinates $(\vec v_1 - \vec v_0, \vec v_0 \times \vec v_1 )$, using the fact that the cross product between two vectors is Euclidean dual to their outer product: $\vec v_0 \times \vec v_1 = (\vec v_1 \wedge \vec v_0)I_3$. It shows direction vector and moment of the line.

\noindent
It immediately follows that the norm and ideal
norm of such a  carrier line
\begin{equation}\label{norms}
\begin{aligned}
\lVert v_0 \vee v_1 \rVert &= \lVert \vec v_1 - \vec v_0 \rVert,  \\
\lVert v_0 \vee v_1 \rVert_\infty &= \lVert \vec v_0 \wedge \vec v_1 \rVert = \lVert \vec v_0 \rVert \lVert \vec v_1 \rVert \sin \theta.
\end{aligned}
\end{equation}
denote respectively the distance between the joined points, and twice the area of the triangle
formed by our two points and the origin, since $\lVert v_0 \vee v_1 \rVert_\infty  = \lVert v_0 \vee v_1 \vee \origin\rVert$. The figure below illustrates all the relevant quantities.
\begin{center}
    \includegraphics[width=0.6\textwidth]{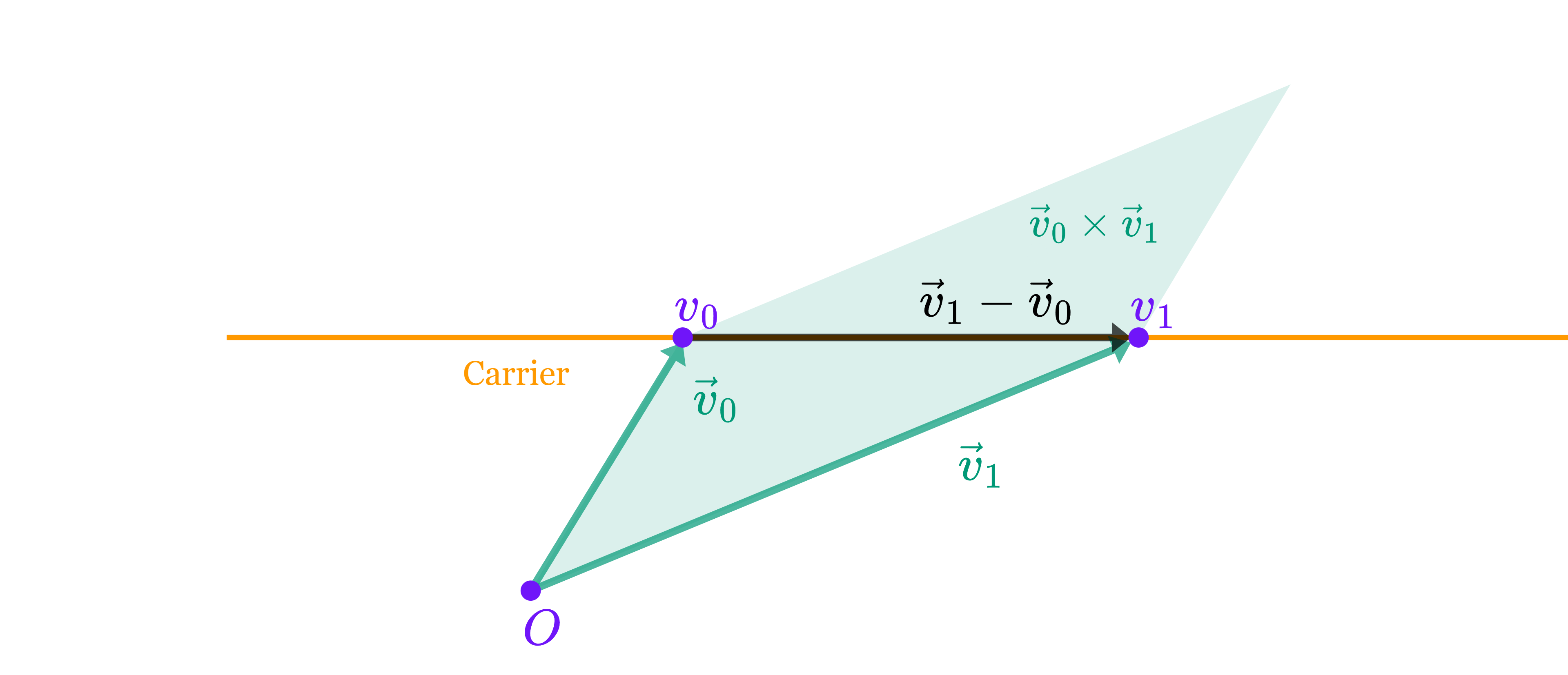}
\end{center}
\noindent
Let us join in a third point $v_2$, giving the quantitative representation of a 3D carrier plane:
\begin{equation}\label{three}
\begin{aligned}
v_0 \vee v_1 \vee v_2 
&= ( (\vec v_2 - \vec v_0) \wedge (\vec v_1 - \vec v_0) )I_3 + \mathbf e_0 (\vec v_0 \wedge \vec v_1 \wedge \vec v_2 ) I_3.
\end{aligned}
\end{equation}
As before, both norms contain valuable  metric information. We rewrite slightly:
\begin{equation}\label{norm3}
\begin{aligned}
\lVert v_0 \vee v_1 \vee v_2 \rVert 
&= \lVert (\vec v_2 - \vec v_0) \wedge (\vec v_1 - \vec v_0) \rVert , \\
&= \lVert \vec v_0 \wedge \vec v_2 + \vec v_2 \wedge \vec v_1 + \vec v_1 \wedge \vec v_0 \rVert \\
\lVert v_0 \vee v_1 \vee v_2 \rVert_\infty &= \lVert \vec v_0 \wedge \vec v_1 \wedge \vec v_2  \rVert = \lVert(\vec v_0 \cross \vec v_1) \cdot \vec v_2 \rVert \ . 
\end{aligned}
\end{equation}
For the Euclidean norm we see the sum of the bivectors (and their associated signed areas) for each (consistently ordered)
pair of vectors. This results in twice the area of the triangle defined by our points as can be seen in \vref{fig:simplicial_extension}.
As the ideal norm, we recognize the scalar triple product, known to produce the volume of the parallelepiped spanned
by our three vectors, or, more geometrically, 6 times the volume of the tetrahedron defined by our three points and the origin.

Joining with a fourth point $v_3$ gives the last carrier in the series, which is a scalar: 
\begin{equation}
v_0 \vee v_1 \vee v_2 \vee v_3  
= \big(\,(\vec v_3 -\vec v_0) \wedge 
(\vec v_2 -\vec v_0) \wedge 
(\vec v_1 -\vec v_0) \,\big)\gp I_3,
\end{equation}
which equals $3!=6$ times the oriented volume of the tetrahedron spanned by the points.
All these results are summarized in \cref{tab:la_vs_pga_norms}, which illustrates the clarity offered by PGA. The pattern even continues beyond 3D.

\begin{table}[!h]
\scriptsize
\renewcommand{\arraystretch}{1.6}
\setlength{\tabcolsep}{0pt}
\centering
\rowcolors{1}{white}{tablegray}
\begin{tabular}{|@{}>{\centering\arraybackslash}p{0.31\textwidth}
                |>{\centering\arraybackslash}p{0.42\textwidth}
                |>{\centering\arraybackslash}p{0.25\textwidth}@{}|}
\arrayrulecolor{tablegray}
\Xhline{0.2pt}
\centering\arraybackslash \textbf{magnitude} & \centering\arraybackslash \textbf{Vector Calculus} & \centering\arraybackslash \textbf{PGA} \\
\Xhline{0.02pt}
amount of $[v_0]$ & $?$ & $\norm{v_0}$ \\
length of edge 
$[v_0, v_1]$ & $\norm{\vec v_1 - \vec v_0}$ & $\norm{v_0 \vee v_1}$ \\
area of triangle $[v_0, v_1, v_2]$ & $\tfrac{1}{2} \norm{(\vec v_1 - \vec v_0) \cross (\vec v_2 - \vec v_0) }$ & $\tfrac{1}{2} \norm{v_0 \vee v_1 \vee v_2}$ \\
volume of tetrahedron $[v_0, \ldots, v_3]$ & $\tfrac{1}{3!} \norm{[(\vec v_1 - \vec v_0) \cross (\vec v_2 - \vec v_0)] \cdot (\vec v_3 - \vec v_0) }$ & $\tfrac{1}{3!} \norm{v_0 \vee v_1 \vee v_2 \vee v_3}$ \\
magnitude of $5$-cell $[v_0, \ldots, v_4]$ & $?$ & $\tfrac{1}{4!} \norm{v_0 \vee v_1 \vee v_2 \vee v_3 \vee v_4}$ \\
\arrayrulecolor{black}
\Xhline{0.2pt}
\end{tabular}
\vspace{3pt}
\caption{Various $k$-magnitudes as computed with vector calculus compared to the PGA norms.}
\label{tab:la_vs_pga_norms}
\end{table}

\section{Simplices and Complexes}\label{Meshes}
Not only the Clifford Algebra, but also the idea of a general $k$-simplex as a dimensional generalization of vertices, edges, triangles, tetrahedrons, et cetera was first introduced by W.K. Clifford \cite{clifford1866problem}, albeit not in the context of geometry but to solve a problem in probability. 
While he called it a ``prime confine'', in this paper we use the more modern name {\em $k$-simplex}.
Additionally we will use the term {\em $k$-complex} to describe a collection of $k$-simplices.
Moreover, a $k$-complex is called {\em complete} 
if for each of its $k$-simplices it also contains the constituent $(k-i)$-simplices for $1 \leq i \leq k$.
For example, for each triangle in a triangle mesh it also contains the edges and the vertices of that triangle.
Unless stated otherwise, we assume all complexes to be complete complexes for the remainder of this paper.
The boundary of a $k$-simplex is a $(k-1)$-complex, and the boundary of a boundary of a closed manifold vanishes.

\subsection{$k$-simplices, Carriers, Carrier Gauges and Simplicial Extension}\label{simplices}
A $k$-simplex $\sigma_k$ is a polytope that is the convex hull of its affinely independent $k+1$ vertices, i.e.
the simplest shape defined by a given ordered set of vertices $\sigma_k = [ v_0, \ldots, v_k ]$.\footnote{The square brackets employ the $k$-chain notation from algebraic topology \cite{munkres1984}.} In order of increasing $k$ these are a vertex, an edge, a triangle, and a tetrahedron.
\noindent\makebox[\textwidth][c]{\includegraphics[width=.7\linewidth]{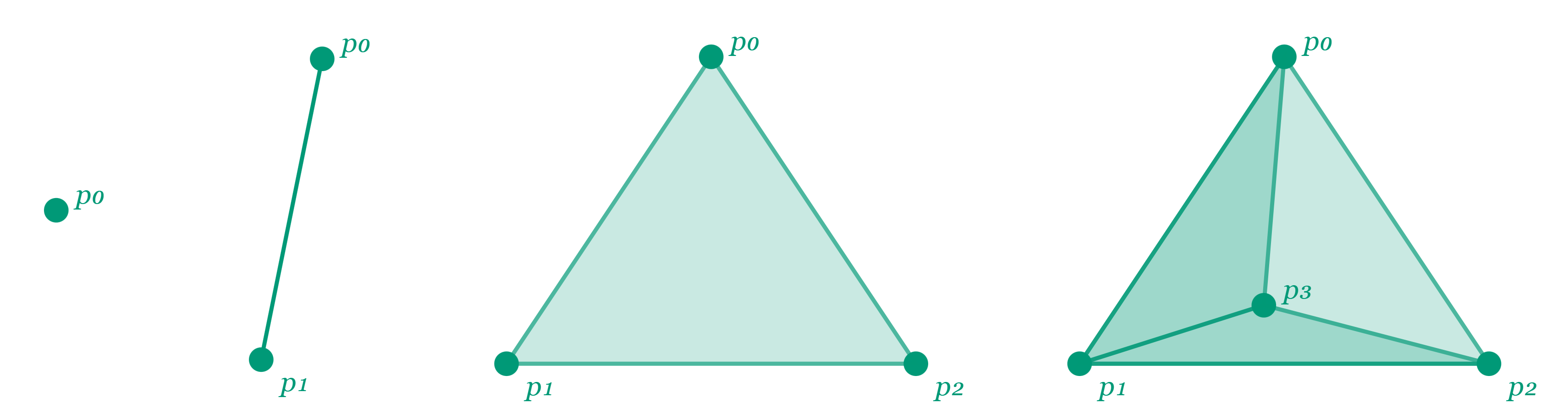}}
We define the $k$-\emph{carrier} of a $k$-simplex $\sigma_k$ as its $k$-dimensional infinite extension, with an appropriate $k$-magnitude assigned to it. 
The carrier of an edge is a line with an edge length measure, the carrier of a triangle is a plane with an area measure, etc. 
Moreover, we define the \emph{simplicial extension} of a $(k-1)$-simplex as the addition of a single vertex $v_k$, which we call the apex, creating a $k$-simplex.  
Let us begin with some geometric observations about $k$-magnitudes (length, area, volume, \ldots) of $k$-simplices.

\begin{minipage}{\linewidth}
  \centering
  \includegraphics[width=0.85\linewidth]{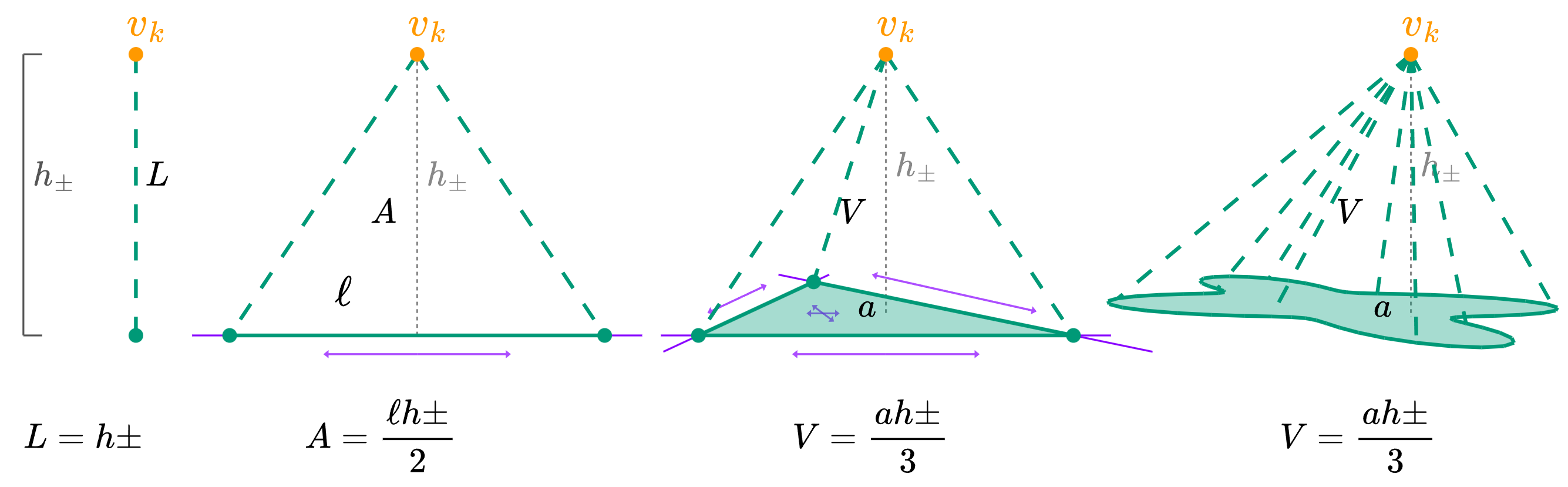}
  \captionsetup{type=figure,skip=2pt}   \captionof{figure}{\small The simplicial extension of a simplex and the associated (signed) $k$-magnitudes. 
  The carrier gauges of the $(k-1)$-carrier are shown as purple arrows.}
  \label{fig:simplicial_extension}
\end{minipage}
In \cref{fig:simplicial_extension}, note that the $k$-magnitude of each $k$-simplex is not affected by the  precise position of the apex $v_k$ (nor any of the other vertices).
Instead, it depends only on the $(k-1)$-magnitude of the $(k-1)$-carrier and the height of the apex vertex $v_k$ to the $(k-1)$-carrier.
For example, any edge of a triangle can be moved over its carrier line without changing the triangle area.
Similarly, each triangle of a tetrahedron can be moved on its carrier without changing the tetrahedron's volume. 
This degree of freedom, which we will call a \emph{carrier gauge}, must be present in any suitable formal definition of the carrier of a $k$-simplex.

As we have already seen how the join product of points creates spaces of higher dimensions that contain those points in \cref{sec split}\ref{subsection_vee}, we define the $k$-carrier of a $k$-simplex as follows:
\begin{definition}[$k$-carrier of a $k$-simplex ${\sigma_k = \left[v_0, \ldots, v_k\right]}$ ]\label{generaljoin} 
\begin{equation*}
{S(\sigma_k) = v_0 \vee \cdots \vee v_k}
\end{equation*}
When there is no ambiquity, we may write $S_k = S(\sigma_k)$.
\end{definition}
\noindent
By construction $S_k$ is the desired carrier space, but we must still verify that it 
(i) captures the correct magnitude, (ii) captures carrier gauges, and (iii) is oriented.

Firstly, in order to show that \cref{generaljoin} correctly measures the magnitude of a $k$-simplex, consider the $k$-simplex $\sigma_k = [v_0, \ldots, v_{k-1}, v_k]$, where we take $v_k$ to be the apex vertex. Using barycentric coordinates, $v_k$ can be expressed as
    \[ v_k = \sum_{i=0}^{k-1} \alpha_i v_i + h \vec u_k^* \ , \]
where $\sum_i \alpha_i = 1$ and $\vec u_k^*$ is a new affinely independent normalized infinite point. Since $v_i \vee v_i=0$,
    \[ S_k = v_0 \vee \cdots \vee v_{k-1} \vee v_k = S_{k-1} \vee (h \vec u_k^*) \]
and hence $\norm{S_k} = \norm{S_{k-1}} \norm{h}$. 

Secondly, the invariance of $S_k$ under transformations of the base $S_{k-1}$ follows from the equivariance of the join product. 
Specifically, let $V \in \SE{k-1}$ be a Euclidean motion in the carrier space $S_{k-1}$. Then $v_i' = V v_i V^{-1}$ are the transformed vertices within the carrier space $S_{k-1}$, but the equivariance allows us to write
    \[ S_{k-1}' = v_0'\vee \ldots \vee v_{k-1}' = V S_{k-1} V^{-1} = S_{k-1} \ , \]
and hence $S_k$ is invariant under transformations that occur in its subcarriers.

Thirdly, the carrier $S_k$ must share the orientation of the $k$-simplex. In other words, swapping two neighboring vertices in a simplex swaps the orientation, i.e. $[\ldots, v_i, v_j, \ldots] = -[\ldots, v_j, v_i, \ldots]$. 
This is captured faithfully by the anti-symmetry of the join product under swapped operands, i.e. $\cdots v_i \vee v_j \cdots = -\cdots v_j \vee v_i \cdots$, and hence a natural property of the carrier.

In conclusion, $S_k$ has the simple interpretation as the oriented carrier of the simplex $\sigma_k$, and its Euclidean norm represents the $k$-magnitude (amount, length, area, ...) of the parallelotope formed by $\sigma_k$, and it exhibits invariance under carrier gauges.
This motivates the following definition of the $k \geq 0$ dimensional magnitude (amount, length, area, volume, ... ) of a $k$-simplex $\sigma_k$ 
(vertex, edge, triangle, tetrahedron, ...), where we introduced a factor $1/k!$ to correct for the over-counting in $S_k$:
\begin{definition}[Magnitude of a $k$-simplex $\sigma_k$]
    \begin{equation*}\label{magnitude}
        \kmag \sigma_k \coloneqq \frac{1}{k!}\lVert S_k \rVert = \frac{1}{k!}{\lVert v_0 \vee \cdots \vee v_k \rVert} \ .
    \end{equation*}
\end{definition}
\noindent
The computation of the $k$-magnitude of a simplex $\sigma_k$ is therefore straightforward.
There is however, an alternative method to compute the magnitude of a $k$-simplex from its boundary, which leverages PGA's inclusion of elements at infinity.

\subsection{Mind the gap}\label{sec:mindthegap}
Consider  a 2D polygon or a 3D mesh.
When a polygon is closed, the sum of its edge vectors is trivially the zero vector. 
Similarly, for a closed mesh, the sum of its face 2-vectors will be the zero 2-vector. 
A direct consequence of this is that if a single edge or face is missing, its magnitude must absolutely equal  the magnitude of the sum of the remaining boundary elements. 
\begin{center}
    \includegraphics[width=0.7\linewidth]{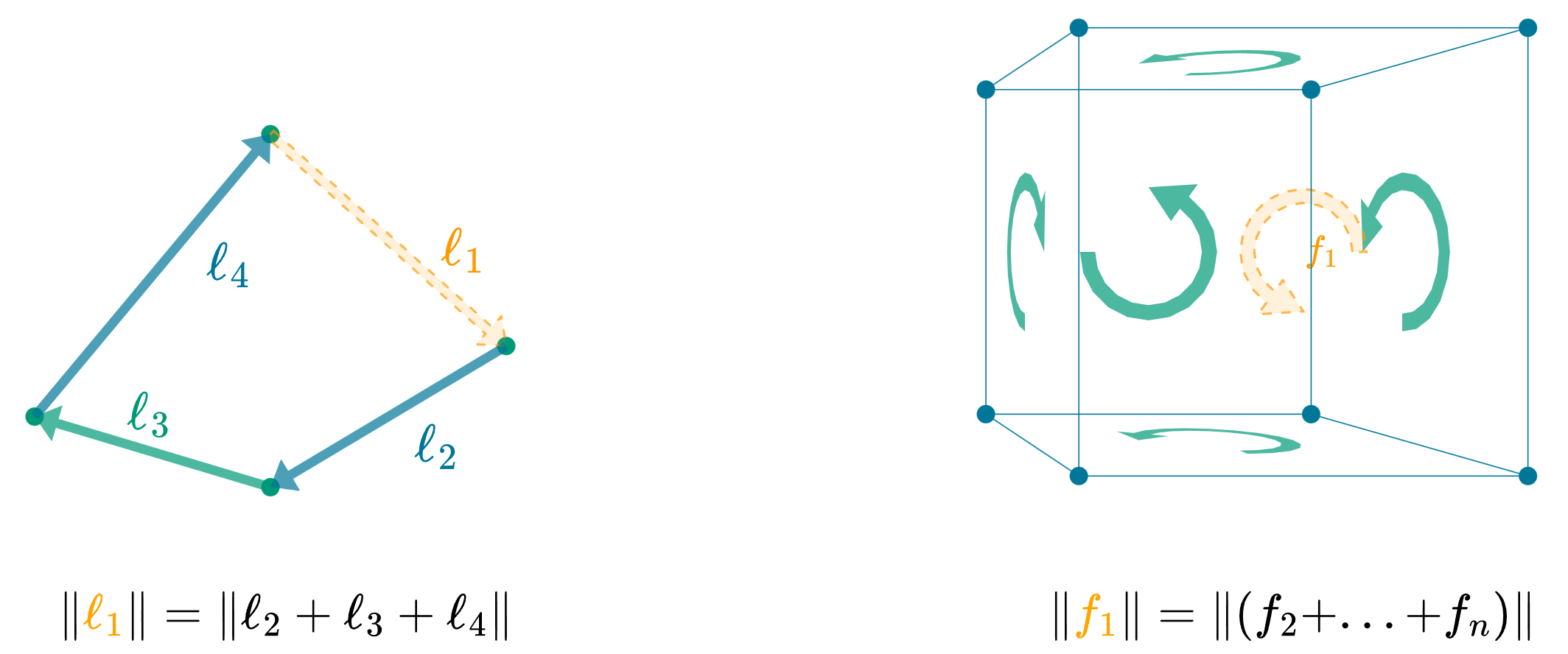}
\end{center}
We can use this idea to find an alternate form to calculate the magnitude of a $k$-simplex $\sigma_k$, by considering it the gap of a $k+1$ simplex formed by the simplicial extension of $\sigma_k$ with an arbitrary point.

Before we continue, we must introduce the boundary of a simplex, a standard concept in algebraic topology \cite{munkres1984}.
The boundary of a $k$-simplex $\sigma_k = [v_0, \ldots, v_k]$ is defined by
    \begin{equation}
        \partial \sigma_k = \sum_{i = 0}^k (-1)^i [v_0, \ldots, \xcancel{v_i}, \ldots, v_k] \ ,
    \end{equation}
where $\xcancel{v_i}$ denotes the missing term.

\begin{center}
    \includegraphics[width=0.95\linewidth]{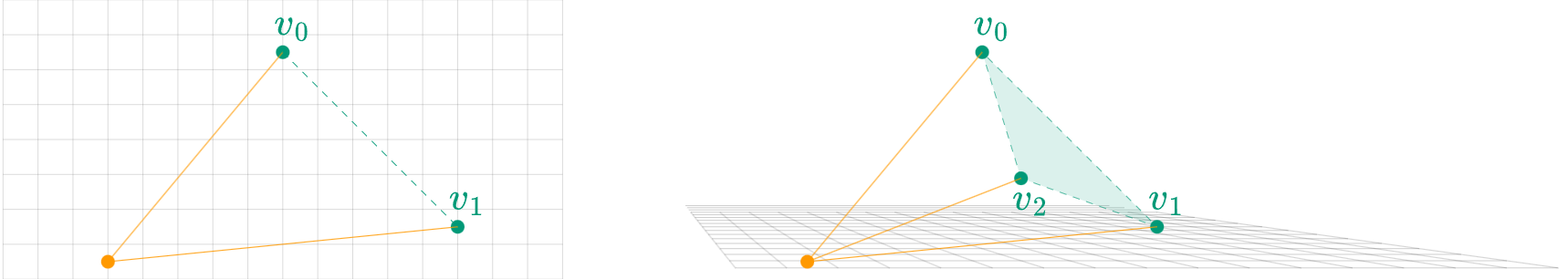}
\end{center}
The figure above (left) then suggests that we can consider any edge $[v_0, v_1]$ as the missing edge, the {\em gap}, of a triangle formed by the $v_0, v_1$ and a new point $v_2$. 
We can then use the closure of $[v_0, v_1, v_2]$ to express the gap $[v_0, v_1]$ as the sum of the remaining edges. 
Moreover, by picking the origin $o$ as the extra point $v_2$, we can factor and recognize the ideal norm $\lVert x \rVert_\infty = \lVert o \vee x \rVert $ of our boundary complex:
\begin{equation}
\lVert v_0 \vee v_1 \rVert = \lVert \overbrace{o \vee v_0 + v_1 \vee o}^\text{2 edges of triangle} \rVert 
= \lVert o \vee (v_1 - v_0) \rVert 
= \overbrace{\lVert v_1 - v_0\rVert_\infty}^\text{boundary vertices} \ .
\end{equation}
Thus, the length of the edge $[v_0, v_1]$ can be computed on the edge as $\norm{v_0 \vee v_1}$ or using its boundary $\partial [v_0, v_1] = [v_1] - [v_0]$ as $\norm{v_1 - v_0}_\infty$.
Similarly, in the right of the figure, any triangle $[v_0, v_1, v_2]$ can be viewed as the missing face of a tetrahedron formed by $v_0, v_1, v_2$ and a new point $v_3$, which we take to be the origin $\origin$:
\begin{equation}
\begin{aligned}
\lVert v_0 \vee v_1 \vee v_2 \rVert &= \lVert \overbrace{o \vee v_0 \vee v_1 + o \vee v_1 \vee v_2 + o \vee v_2 \vee v_0}^\text{three remaining faces of tetrahedron} \rVert \\
&= \lVert o \vee (v_0 \vee v_1 + v_1 \vee v_2 + v_2 \vee v_0 ) \rVert \\
&= \lVert \underbrace{ v_0 \vee v_1 + v_1 \vee v_2 + v_2 \vee v_0 }_\text{boundary edges of triangle}\rVert_\infty
\end{aligned}
\end{equation}
Hence, the area of the triangle $[v_0, v_1, v_2]$ can be computed on the face as $\norm{v_0 \vee v_1 \vee v_2}$ or using its boundary $\partial [v_0, v_1, v_2] = [v_1, v_2] - [v_0, v_2] + [v_0, v_1]$ as $\lVert { v_0 \vee v_1 + v_1 \vee v_2 + v_2 \vee v_0 } \rVert_\infty$.
These two examples clearly illustrate that the magnitude of $k$-simplex can alternatively be computed from its boundary, leading to \vref{th:magnitude}.
\begin{theorem}[Magnitude of a $k$-simplex $\sigma_k$ from the boundary $\partial \sigma_k$]\label{th:magnitude}
The magnitude (volume, area, ...) of a $k$-simplex $\sigma_k$ can be computed over the $(k-1)$-dimensional facets (areas, lengths, ...) comprising its oriented boundary $\partial \sigma_k$:
\begin{equation*}\label{magnitudeIdeal}
\begin{aligned}
\kmag \sigma_k = \frac{1}{k!} \lVert S_k \rVert &= \frac{1}{k!} \big\lVert \sum_{\sigma_{k-1} \in \partial \sigma_k} S(\sigma_{k-1}) \rVert_\infty \\
&= \frac{1}{k!} \big\lVert \sum_{i=0}^{k} (-1)^{i} \, v_0 \vee \cdots \vee \xcancel{v_i} \vee \cdots \vee v_k \rVert_\infty\ ,
\end{aligned}
\end{equation*}
where $\xcancel{v_i}$ means the vertex $v_i$ is missing from the product.
Note that the sign swaps are precisely those in the definition of the boundary $\partial \sigma_k$.
\end{theorem}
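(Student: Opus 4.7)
The strategy is to split each vertex into an origin part plus a purely ideal displacement, then exploit the fact that the join of a point with itself vanishes. Writing $v_i = \origin + w_i$ with $w_i \coloneqq v_i - \origin$, \cref{point} makes $w_i = -\eo \vec{v}_i I_3$ manifestly purely ideal. Multilinearity of the join together with $\origin \vee \origin = 0$ then reduces the expansion of
\[
  S_k \;=\; (\origin + w_0) \vee (\origin + w_1) \vee \cdots \vee (\origin + w_k)
\]
to just the all-$w$ term plus the $k{+}1$ terms containing exactly one $\origin$:
\[
  S_k \;=\; w_0 \vee \cdots \vee w_k \;+\; \sum_{i=0}^{k} w_0 \vee \cdots \vee w_{i-1} \vee \origin \vee w_{i+1} \vee \cdots \vee w_k.
\]

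Next I would identify the right-hand sum with $\origin \vee B$, where $B \coloneqq \sum_{i=0}^{k} (-1)^i\, v_0 \vee \cdots \vee \hat{v}_i \vee \cdots \vee v_k$ is precisely the boundary carrier sum appearing in \cref{th:magnitude}. Applying the same $v_j = \origin + w_j$ expansion to each summand of $B$ and pre-joining with $\origin$ annihilates every contribution that already contains an $\origin$, leaving only $\origin \vee (w_0 \vee \cdots \vee \hat{w}_i \vee \cdots \vee w_k)$. Because the join of two points in $\R_{3,0,1}$ anti-commutes (its grade-commutation factor is $(-1)^{(4-3)(4-3)} = -1$), sliding $\origin$ from the front into slot~$i$ produces a factor $(-1)^i$ that cancels the boundary sign. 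This establishes the key identity
\[
  S_k \;=\; w_0 \vee \cdots \vee w_k \;+\; \origin \vee B.
\]

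Finally I would show that $w_0 \vee \cdots \vee w_k$ is purely ideal, and therefore invisible to the Euclidean norm $\lVert \cdot \rVert$. Since each $w_i$ has vanishing Euclidean part $(w_i)_E = 0$, the Euclidean split of the join given in \cref{sec split}\ref{subsection_dual} retains only its $\eo$-prefixed branch whenever both inputs are purely ideal; an easy induction in $k$ then gives $(w_0 \vee \cdots \vee w_k)_E = 0$. Consequently $\lVert S_k \rVert = \lVert \origin \vee B \rVert = \lVert B \rVert_\infty$ by the definition of the ideal norm (\cref{subsection_norms}), and dividing by $k!$ yields \cref{th:magnitude}. The main obstacle is the combinatorial sign bookkeeping of the middle step, where two alternating $(-1)^i$ patterns must cancel exactly; once that is handled, the purely-ideal claim is a short application of the split formula for the join.
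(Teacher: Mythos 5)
Your proposal is correct and follows essentially the same route as the paper's proof: the decomposition $v_i = \origin + \vec v_i^{\,*}$ (your $w_i$), the multilinear expansion killing all multi-$\origin$ terms, the identification of the single-$\origin$ terms with $\origin \vee \sum_{\partial\sigma_k} S_{k-1}$, and the observation that the residual all-ideal term is invisible to the Euclidean norm. You merely make explicit two details the paper leaves implicit — the cancellation of the boundary signs against the $(-1)^i$ from anticommuting $\origin$ into position, and the induction showing the join of purely ideal elements stays ideal — both of which check out.
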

\begin{proof}
Any vertex $v_i = \origin + \vec v_i^*$, and hence
\begin{align}
    v_0 \vee \cdots \vee v_k &= ( \origin + \vec v_0^*) \vee \cdots \vee ( \origin + \vec v_k^*) \\
    &= \origin \vee \sum_{i=0}^k (-1)^i v_0 \vee \cdots \vee \xcancel{v_i} \vee \cdots \vee v_k \quad + \quad  \underbrace{\vec v_0^* \vee \cdots\vee \vec v_k^*}_\text{ideal term} \ ,
\end{align}
where we repeatedly use $\origin \vee \vec v_i^* = \origin \vee v_i$.
It therefore follows that $\norm{S_k} = \lVert \origin \vee \sum_{\partial \sigma_k} S_{k-1} \rVert = \lVert {\sum_{\partial \sigma_k} S_{k-1}} \rVert_\infty$.
\end{proof}
\noindent
The relationship between the Euclidean and ideal terms have a strong Stokes' Theorem feeling to them, as it relates relates the magnitude of a $k$-simplex to the magnitude of the facets on its boundary. 
This is not a coincidence, as will become even more clear in \cref{dynamics}.
\Cref{tab:pga-norms} gives an overview of the relationship between the Euclidean and ideal norms for various simplices.

\begin{table}[!h]
\scriptsize
\renewcommand{\arraystretch}{1.6}
\setlength{\tabcolsep}{0pt}
\centering
\rowcolors{1}{white}{tablegray}
\begin{tabular}{|@{}>{\centering\arraybackslash}p{0.19\textwidth}
                |>{\centering\arraybackslash}p{0.19\textwidth}
                |>{\centering\arraybackslash}p{0.26\textwidth}
                |>{\centering\arraybackslash}p{0.26\textwidth}@{}|}
\arrayrulecolor{tablegray}
\Xhline{0.2pt}
$\boldsymbol{x}$ & \centering\arraybackslash \textbf{simplex/carrier} & \centering\arraybackslash $\boldsymbol{\lVert x \rVert}$ & \centering\arraybackslash $\boldsymbol{\lVert x \rVert_\infty} = \lVert \origin \vee \boldsymbol{x} \rVert$ \\
\Xhline{0.02pt}
$v$ & vertex/point & 1 & length of line to origin \\
$v_0 \vee v_1$ & edge/line & length of edge & $2\times$ area of triangle with origin \\
$v_0 \vee v_1 \vee v_2$ & triangle/plane & $2\times$ triangle area & $6\times$ volume of tetra with origin \\
$v_0 \vee v_1 \vee v_2 \vee v_3$ & tetrahedron/volume & $6\times$ tetrahedron volume & $0$ \\
\arrayrulecolor{black}
\Xhline{0.2pt}
\end{tabular}
\vspace{3pt}
\caption{Norm and ideal norm for 3D PGA elements built from joining normalized vertices. The regular PGA pattern  continues into $d$ dimensions. }
\label{tab:pga-norms}
\end{table}

\subsection{A $k$-Complex and its Magnitude}\label{complexes}
\Cref{th:magnitude} holds the key to extending from simplices to simplicial complexes, enabling us to easily calculate $k$-magnitudes of $k$-complexes $\mathcal{K}_k$.
Recall that a $k$-complex $\mathcal{K}_k$ is a collection of $k$-simplices $\sigma_k$. 
For example, a triangle mesh is a collection of triangles ($2$-simplices) so it is a $2$-complex, and a 3D discrete shape can be represented as a collection of tetrahedra ($3$-simplices) in its interior so it is a $3$-complex.
The common representation os a 3D mesh is however as a $2$-complex representing its oriented, closed and manifold boundary.
That boundary representation is independent of the actual volumetric simplices composing the complex.
In practical terms, \cref{th:magnitude} enables the direct computation of areas of arbitrary planar polygons from either the volumetric or boundary representations (without triangulation), and volumes of arbitrary triangle meshes without tetrahedralization.

To illustrate, consider the polygon shown in the figure below.
It can be represented using either a $2$-chain $\mathcal{K}_2 = [v_0, v_1, v_2] + [v_0, v_2, v_3]$, or using its $1$-chain boundary $\mathcal{K}_1 = \partial \mathcal{K}_2 = [v_0, v_1] + [v_1, v_2] + [v_2, v_3] + [v_3, v_0]$.
\begin{center}
            \includegraphics[width=0.7\linewidth]{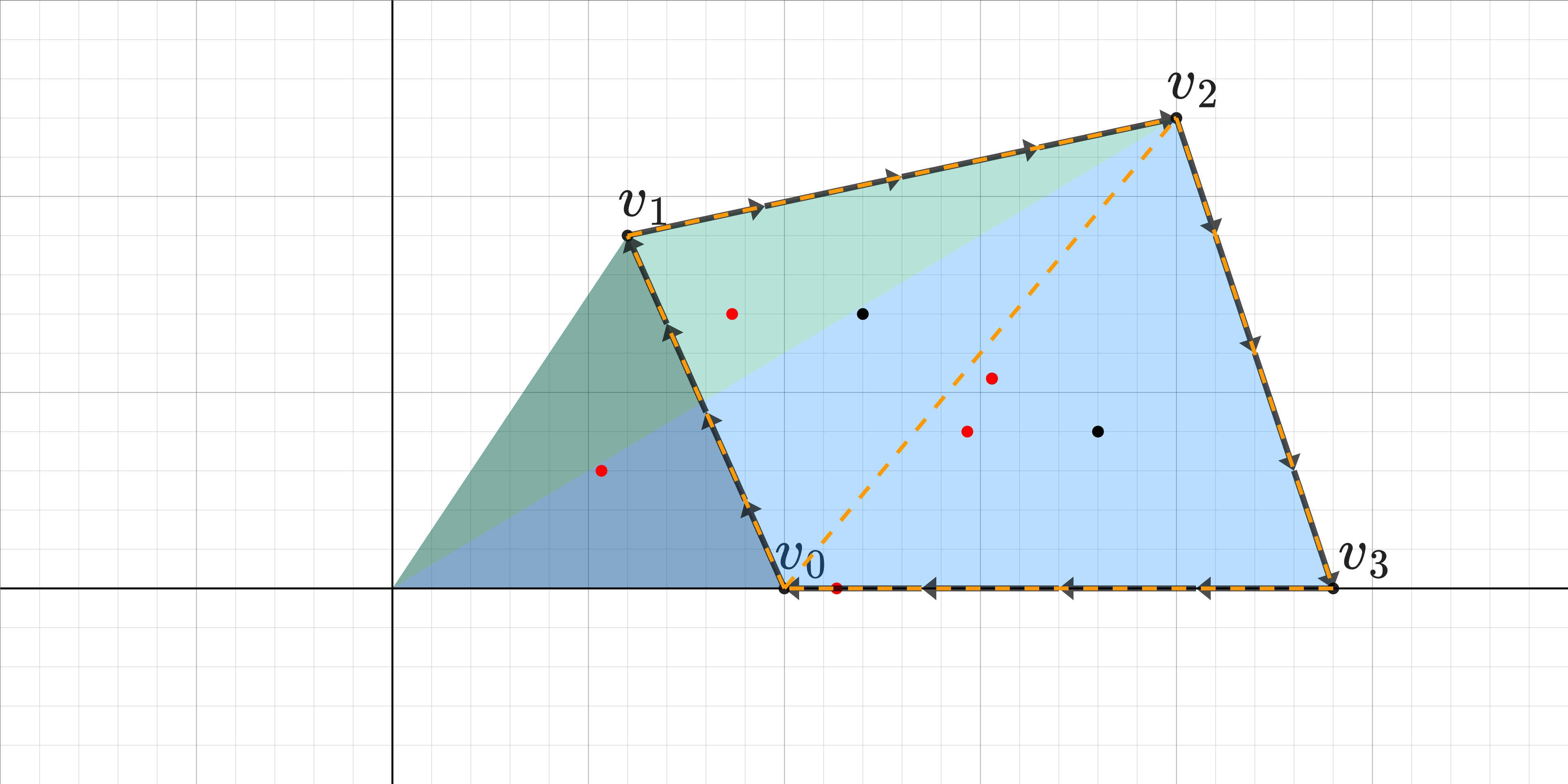}
\end{center}
Extending the lessons of \cref{sec:mindthegap} from simplices to a complexes, the signed area $A$ of the polygon can be computed using either $\mathcal{K}_2$, or from the boundary $\partial \mathcal{K}_2$ by joining in an arbitrary apex point (taken to be the origin $\origin$):
    \begin{align}
        A &= \tfrac{1}{2} \sum_{\sigma_2 \in \mathcal{K}_2} S(\sigma_2) = \tfrac{1}{2} ( v_0\vee v_1 \vee v_2 + v_0 \vee v_2 \vee v_3) \label{eq:area_from_surface} \\
        &= \tfrac{1}{2} \ \origin \vee \sum_{\sigma_1 \in \partial \mathcal{K}_2} S(\sigma_1) = \tfrac{1}{2} \, \origin \vee ( v_0\vee v_1 + v_1 \vee v_2 + v_2 \vee v_3 + v_3 \vee v_0) \label{eq:area_from_boundary}
            \end{align}
Note that if the origin $\origin$ (which can be freely chosen) coincides with any of the $v_i$, \cref{eq:area_from_boundary} reduces to \cref{eq:area_from_surface}, which shows that the two are indeed equal.
But if only the unsigned area is required, \cref{eq:area_from_boundary} simplifies to $\lVert \sum_{\sigma_1 \in \partial \mathcal{K}_2} S(\sigma_1) \rVert_\infty$, see \cref{th:magnitude}. Note that to correctly compute the area it is important to first sum and then take the ideal norm, because we need to use signed areas in order for overlapping regions to cancel one another.
Hence the (unsigned) $k$-magnitude of a $k$-complex $\mathcal{K}_k$ is given by the following theorem.
\begin{theorem}[Magnitude of a $k$-complex $\mathcal{K}_k$]\label{th:magnitudeC}
The $k$-dimensional magnitude of a $k$-complex $\mathcal{K}_k$ is 
a function of the $k$-simplices in its interior, or 
of the $(k-1)$-simplices on its oriented boundary $\partial \mathcal{K}_k$:
\begin{equation}\label{hyperVolume}
\kmag \mathcal{K}_k \coloneqq \sum\limits_{\mathcal{K}_k} \kmag \sigma_k 
= \frac{1}{k!} \sum\limits_{\mathcal{K}_k} \lVert S(\sigma_k) \rVert
= \frac 1 {k!}\lVert \sum\limits_{\partial {\mathcal{K}_k}} S(\sigma_{k-1})\rVert_\infty 
 \end{equation}
\end{theorem}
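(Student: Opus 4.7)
The plan is to prove both equalities in turn: the first is immediate from the definition of $\kmag$, and the second lifts \cref{th:magnitude} from a single simplex to the whole complex via a boundary-cancellation argument.

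The leftmost equality $\sum_{\mathcal{K}_k} \kmag \sigma_k = \tfrac{1}{k!} \sum_{\mathcal{K}_k} \lVert S(\sigma_k) \rVert$ is just the definition $\kmag \sigma_k = \tfrac{1}{k!} \lVert S(\sigma_k) \rVert$ applied term by term and factored out.

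For the rightmost equality, I would apply the expansion used in the proof of \cref{th:magnitude} to every $\sigma_k \in \mathcal{K}_k$, writing
\[
S(\sigma_k) = \origin \vee \sum_{\sigma_{k-1} \in \partial \sigma_k} S(\sigma_{k-1}) + W(\sigma_k),
\]
where $W(\sigma_k) = \vec v_0^* \vee \cdots \vee \vec v_k^*$ is a join of purely ideal points and hence purely ideal. Summing over $\sigma_k \in \mathcal{K}_k$ and interchanging the order of summation, the standard algebraic-topology cancellation kicks in: in a complete complex, every internal $(k-1)$-facet is shared by exactly two adjacent $k$-simplices with opposite induced orientations, so
\[
\sum_{\sigma_k \in \mathcal{K}_k} \sum_{\sigma_{k-1} \in \partial \sigma_k} S(\sigma_{k-1}) = \sum_{\sigma_{k-1} \in \partial \mathcal{K}_k} S(\sigma_{k-1}).
\]
Therefore $\sum_{\mathcal{K}_k} S(\sigma_k) = \origin \vee \sum_{\partial \mathcal{K}_k} S(\sigma_{k-1}) + (\text{purely ideal})$, and taking Euclidean norms discards the ideal term; via the identity $\lVert \origin \vee X \rVert = \lVert X \rVert_\infty$ from \cref{subsection_norms} this yields $\lVert \sum_{\mathcal{K}_k} S(\sigma_k) \rVert = \lVert \sum_{\partial \mathcal{K}_k} S(\sigma_{k-1}) \rVert_\infty$.

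The delicate step, which I expect to be the main obstacle, is bridging $\lVert \sum S(\sigma_k) \rVert$ and $\sum \lVert S(\sigma_k) \rVert$. These coincide precisely when the carriers $S(\sigma_k)$ are positive scalar multiples of a single $(n-k)$-blade, i.e.\ when the $k$-simplices of $\mathcal{K}_k$ share a common $k$-carrier with a consistent orientation. This is automatic for a planar polygon (all triangle carriers equal the polygon's plane up to signed area) and for a solid tetrahedralization (each $S(\sigma_3)$ is a signed scalar and orientations agree), which are exactly the cases tabulated in \cref{table:cheat}. I would state this consistent-orientation hypothesis up front as part of what it means for $\mathcal{K}_k$ to represent a proper oriented $k$-region; dividing the resulting identity by $k!$ then closes the argument.
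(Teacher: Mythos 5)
Your argument follows essentially the same route as the paper's own proof: expand each $S(\sigma_k)$ as $\origin \vee \sum_{\partial\sigma_k} S(\sigma_{k-1})$ plus a purely ideal term, interchange the sums, cancel the internal facets by the antisymmetry of $\vee$ under consistent orientation, and read off the ideal norm. One point in your favour: you explicitly flag that this chain only establishes $\lVert \sum_{\mathcal{K}_k} S(\sigma_k)\rVert = \lVert \sum_{\partial\mathcal{K}_k} S(\sigma_{k-1})\rVert_\infty$, whereas the middle member of \cref{hyperVolume} is $\sum_{\mathcal{K}_k}\lVert S(\sigma_k)\rVert$; the paper's proof silently identifies the two, which is valid only under your added hypothesis that the carriers are consistently oriented multiples of a common blade (as in a planar polygon or a solid tetrahedralization, where each $S(\sigma_k)$ is a like-signed scalar), so making that assumption explicit is a genuine improvement rather than a deviation.
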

Before we provide the proof to this theorem, it is important to stress that the sum runs directly over the boundary of the $k$-complex, and not over the boundary of every individual $k$-simplex in the complex. 
This is due to the fact that contributions from internal boundaries of $k$-simplexes in the complex cancel, as the following figure shows, where the $1$-chains $[v_0, v_2] = -[v_2, v_0]$, a property reflected in the anti-symmetry of the join product $v_0 \vee v_2 = - v_2 \vee v_0$.
\begin{center}
    \includegraphics[width=0.4\textwidth]{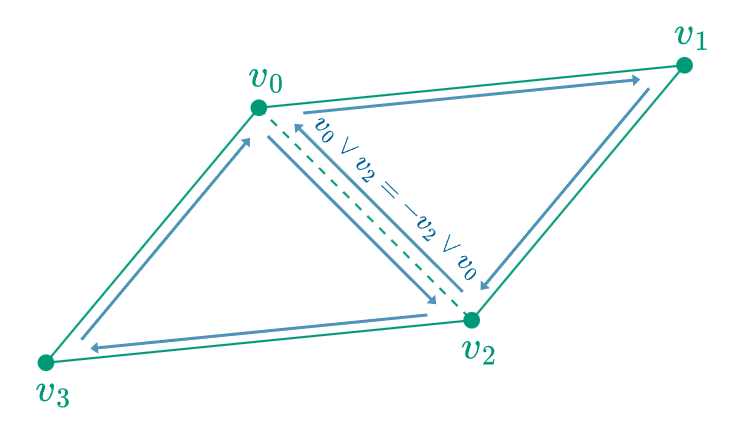}
\end{center}
We now proceed with the proof to \cref{th:magnitudeC}.
\begin{proof}
Recall from the proof of \cref{th:magnitude} that $ S(\sigma_k) = \origin \vee \sum_{\partial \sigma_k} S(\sigma_{k-1}) + \text{ideal}$, so by direct computation we find
\[ \sum\limits_{\sigma_{k} \in \mathcal{K}_k} S(\sigma_k)
= \origin \vee \sum\limits_{\sigma_{k} \in \mathcal{K}_k} \, \sum\limits_{\sigma_{k-1} \in \partial \sigma_k} S(\sigma_{k-1}) + \text{ideal term} \ ,  \]
where we have been verbose to clarify every sum.
However, this sum includes all internal simplices $\sigma_{k-1}$, which cancel due to the anti-symmetry of the join product if the $k$-complex $\mathcal{K}_k$ is consistently oriented, leaving only the boundary vertices to contribute to the sum, and thus
    \[ \sum\limits_{\mathcal{K}_k} \sum\limits_{\partial \sigma_k} S_{k-1} = \sum\limits_{\partial \mathcal{K}_k} S_{k-1} \ . \]
Hence, $\lVert \sum\limits_{\mathcal{K}_k} S_k \rVert = \lVert \origin \vee \sum\limits_{\partial \mathcal{K}_k}  S_{k-1} \rVert = \lVert \sum\limits_{\partial \mathcal{K}_k} S_{k-1} \rVert_\infty$.
\end{proof}

In conclusion, we can compute the magnitude of a complex $\mathcal{K}_k$, which was originally defined in terms of a simplex decomposition, entirely in terms of the sum of the boundary simplices $\sum_{\partial \mathcal{K}_k} S_{k-1}$, by taking the ideal norm of this sum.
We will now investigate what the significance of the Euclidean norm of this sum over the boundary simplices is.
The sum is zero if the original complex is closed; but if the complex is not, 
then the Euclidean norm of this sum is actually the norm of the sum of the $(k-1)$-dimensional simplex facets missing from the boundary. 
This leads to the following corollary:

\begin{corollary}[magnitude of complex gap]\label{col:gap}
Given a $k$-complex $\mathcal{K}_k$ with a non-closed boundary $\partial \mathcal{K}_k$, i.e. $\partial \partial \mathcal{K}_k \neq 0$, 
the $(k-1)$-magnitude of the complex gap $\xcancel{\partial} \mathcal{K}_k$ is identical to that of $\partial \mathcal{K}_k$, and hence given by
\begin{equation}\label{gap}
\kmag \xcancel{\partial} \mathcal{K}_k
=\kmag \partial \mathcal{K}_k
= \frac {1} {(k-1)!} \lVert \sum\limits_{\partial \mathcal{K}_k} { S_{k-1} }\lVert
\end{equation}
\end{corollary}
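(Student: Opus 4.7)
The plan is to exploit the fact that, by construction, $\partial \mathcal{K}_k + \xcancel{\partial} \mathcal{K}_k$ is a closed $(k-1)$-complex --- the gap is exactly what must be added to $\partial \mathcal{K}_k$ so that its own boundary vanishes. The whole corollary then reduces to one key lemma: for any closed $(k-1)$-complex $\mathcal{C}$, the sum $\sum_{\sigma_{k-1} \in \mathcal{C}} S(\sigma_{k-1})$ is purely ideal, i.e.\ its Euclidean part vanishes.

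To prove that lemma I would mirror the decomposition already used in the proof of \cref{th:magnitude}. Writing each vertex as $u_i = \origin + \vec u_i^*$, every carrier factors as
\[
S(\sigma_{k-1}) = \origin \vee \sum_{\sigma_{k-2} \in \partial \sigma_{k-1}} S(\sigma_{k-2}) \;+\; \vec u_0^* \vee \cdots \vee \vec u_{k-1}^*,
\]
where the trailing join of ideal points lies entirely in the ideal plane. Summing over $\sigma_{k-1} \in \mathcal{C}$, pulling $\origin\vee$ out by linearity, and collapsing the double sum via the anti-symmetry of $\vee$ on shared $(k-2)$-facets (exactly the cancellation argument used in the proof of \cref{th:magnitudeC}) gives
\[
\sum_{\sigma_{k-1} \in \mathcal{C}} S(\sigma_{k-1}) \;=\; \origin \vee \sum_{\sigma_{k-2} \in \partial \mathcal{C}} S(\sigma_{k-2}) \;+\; (\text{purely ideal}).
\]
Closedness forces $\partial \mathcal{C} = 0$, so the first term vanishes outright and the entire sum is purely ideal.

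Applying this lemma to $\mathcal{C} = \partial \mathcal{K}_k + \xcancel{\partial} \mathcal{K}_k$ then gives $\bigl(\sum_{\partial \mathcal{K}_k} S_{k-1}\bigr)_E = -\bigl(\sum_{\xcancel{\partial} \mathcal{K}_k} S_{k-1}\bigr)_E$. Because the Euclidean norm depends only on the Euclidean term of the Euclidean split, the two norms agree, and dividing by $(k-1)!$ is the stated identity.

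The main obstacle is making the key lemma rigorous. All of its ingredients --- the vertex decomposition from \cref{th:magnitude}, the shared-facet cancellation from \cref{th:magnitudeC}, and the fact that joins of ideal points are ideal --- have already been established, so the work is really bookkeeping rather than new mathematics. The one subtlety to watch is that $\mathcal{C}$ must be a consistently oriented manifold complex, so that `closed' truly forces $\partial \mathcal{C} = 0$ and every internal $(k-2)$-subsimplex cancels against its oppositely-oriented neighbour.
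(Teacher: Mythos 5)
Your proof is correct and follows essentially the same route as the paper: the paper justifies this corollary by exactly the observation you isolate as a key lemma --- that the carrier sum over a closed $(k-1)$-complex is purely ideal, obtained from the vertex decomposition used in the proof of \cref{th:magnitude} together with the internal-facet cancellation from the proof of \cref{th:magnitudeC} --- so that the Euclidean terms of $\sum_{\partial\mathcal{K}_k}S_{k-1}$ and $\sum_{\xcancel{\partial}\mathcal{K}_k}S_{k-1}$ are negatives of one another and hence have equal Euclidean norm. Your reading of the gap as the chain completing $\partial\mathcal{K}_k$ to a cycle, and your caveat about consistent orientation, match the paper's intent.
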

\noindent
Hence, the magnitude of a complex gap can be measured directly, without needing to explicitly reconstruct this gap first.
Note again that it is the norm of the sum, and not the sum of the norm. 
This magnitude will equal exactly the area of the missing boundary simplices iff those simplices share the same 
carrier. (i.e., if they are edges on the same line or triangles in the same plane, etc.).
We shall now give a hands-on example that demonstrates the power of \cref{col:gap}.

\subsection{Example: Volume of a Capped Mesh}\label{example 1}
Let us use the formulas from \vref{table:cheat}, to solve a real world 3D engineering problem. Given a closed manifold mesh of an airplane fuel tank, and a plane representing the fuel's top surface, determine the remaining amount of fuel, as well as track the evolving center of mass\footnote{This problem was first posed via bivector.net by Dr Todd Ell of Collins Aerospace. The solution presented is in use today.}. 
The setup is shown in the figure on the first page.

After embedding our vertices, edges and faces as their respective PGA carriers using the unnormalized join of  1, 2 and 3 points respectively, our solution has two parts. First we determine which triangles are under the fuel plane, subdividing triangles that cross the level plane. Next, we calculate the volume and center of mass of the, no longer closed, mesh formed by the triangles under the plane.

PGA's tools will turn out to be well suited to handle both tasks. Determining which triangles are under the plane requires finding the side of the fuel plane a given vertex is on. The PGA join $\vee$ provides the perfect tool for this, as we can directly produce a signed volume by joining the fuel plane with the tested point. Worked out in coefficients, $p \vee v_i$ produces the familiar efficient expression: 

\begin{align}
\label{test:side}
p \vee v_i &= (a\mathbf e_{1} + b\mathbf e_{2} + c\mathbf e_{3} + d\mathbf e_{0}) \vee (x\mathbf e_{032} + y\mathbf e_{013} + z\mathbf e_{021} + \mathbf e_{123})  \\ &= ax+by+cz+d. \nonumber
\end{align}

Armed with \vref{test:side}, we iterate over all triangles in our mesh, discarding those that are entirely above the plane based on the sign of $p \vee v_i$, while retaining those entirely below. The triangles we retain are not stored, instead their associated (and precalculated) carrier plane is accumulated in a running sum $F = \sum v_0 \vee v_1 \vee v_2$. The triangles that intersect the plane must be split into three new triangles. Here the PGA meet $\wedge$ operation is the perfect tool. The vertex created when an edge $E$ intersects a plane $p$ is found as
\begin{equation}
\label{intersect}
\begin{aligned}
E \wedge p = &\,(d_x\mathbf e_{01} + d_y\mathbf e_{02} + d_z\mathbf e_{03} + z\mathbf e_{12} + y\mathbf e_{31} + x\mathbf e_{23}) \wedge (a\mathbf e_{1} + b\mathbf e_{2} + c\mathbf e_{3} + d\mathbf e_{0}) \\
= &\, (bd_z-cd_y-dx)\mathbf e_{032} + (cd_x-ad_z-dy)\mathbf e_{013} +\\ &\, (ad_y-bd_x-dz)\mathbf e_{021} + (ax+by+cz)\mathbf e_{123}.
\end{aligned}
\end{equation}
For triangles that have a single vertex under the fuel plane, we construct and normalize the new vertices, join them with the vertex under the plane into a new triangle and add the resulting vector to the running sum.
For triangles with a single vertex above the plane, we can avoid having to construct the two new triangles under the plane by realizing that all three subdivided triangles must sum up to our (precalculated) starting triangle. So in this scenario we can construct the single triangle above the plane and subtract it from our precalculated value for the entire triangle.
\noindent\makebox[\textwidth][c]{\includegraphics[width=1.0\textwidth]{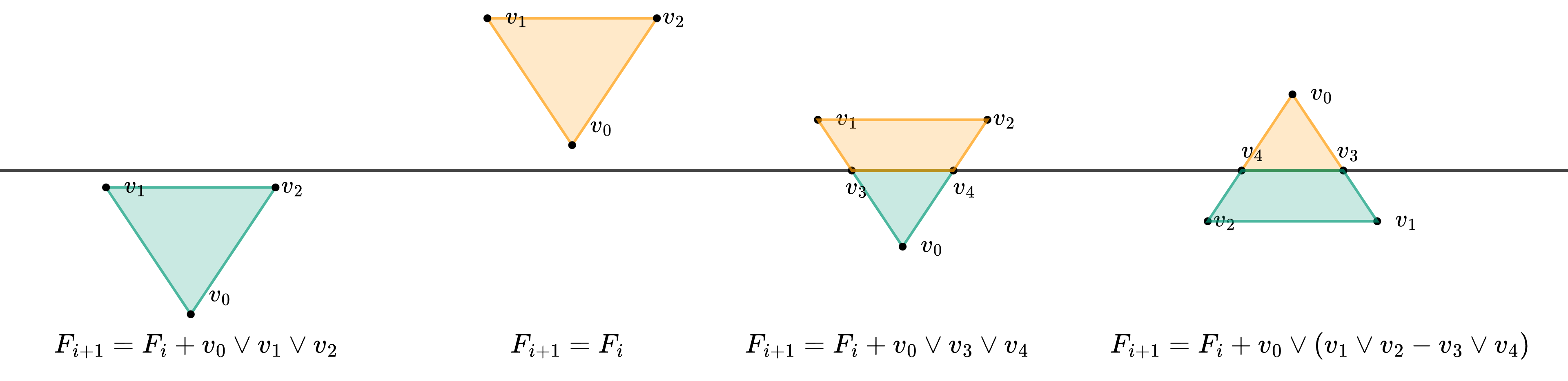}
}
At the end of this iteration we have a sum $F$, a vector that is simply the sum of all triangles under the intersection plane. As we saw before, this is all that is needed to calculate the volume under the plane, even though this mesh is not closed.
By starting with a manifold closed mesh, we know that all missing triangles lie in the same carrier, namely the intersection plane itself. Because of that, $a = \tfrac{1}{2}\lVert F \rVert$ is exactly the area of the gap. We also know the signed distance of the intersection plane $h =  \overline{p} \vee \origin$ to the origin, hence we can calculate the signed volume of the simplicial extension of the gap as $\tfrac{1}{3}ah = \tfrac{1}{6}\lVert F \rVert(\overline p \vee \origin) $ enabling us to express the total volume as
\begin{equation}
\label{fuelVolumeOld}
\begin{aligned}
V_{\text{fuel}} 
&= \frac 1 6 \big (F \vee \origin +  \lVert F\rVert (\overline p \vee \origin)\big) \\
\end{aligned}
\end{equation}
Understanding the arbitrary nature of the origin in this equation allows us to simplify further as for any point $o' \in p$, we have $\overline p \vee o' = 0$ so the second term vanishes. Hence we pick a point $o' = (o \cdot p)p^{-1}$ and find
\begin{equation}
\label{fuelVolume}
V_\text{fuel} = \frac{1}{6}  F \vee o'
\end{equation}
Recalling \cref{test:side} we note that if $o'$ is indeed chosen as  the origin ($x=y=z=0$), the expression $F \vee o'$ for a vector $F$ is simply the extraction of the $\mathbf e_0$ or $d$ coefficient of its associated linear equation $ax + by + cz + d = 0$. The other coefficients of $F$ are then unused, and do not need to be accumulated or calculated for the intersecting triangles, reducing the accumulator to a sum of these $d$ values. For the full triangles they are precalculated and for a new intersecting triangle, in function of the coordinates $x_i, y_i, z_i$ of its three points works out to just
\[
d=x_3(y_2z_1-y_1z_2)+x_2(y_1z_3-y_3z_1)+x_1(y_3z_2-y_2z_3)
\]
When the intersection plane does not pass through the origin, we can still pick an $o'$ with two zero coefficients, by intersecting one of the coordinate axis with $p$. In this case only two coefficients of $F$ need to be calculated and accumulated, and from \cref{test:side} we note that the join with $o'$ now involves only a single multiply-and-add.

Finally, we can halve the average runtime once more by noting that the computational cost is dominated by the number of triangles included in the sum. Since the total volume of the fuel tank is fixed, we don’t always need to integrate the part below the plane. When the tank is nearly full, it is more efficient to calculate the (smaller) volume above the plane and subtract that from the total volume.

\section{Moments of Simplices and Complexes}\label{dynamics}
Now that we have seen how to compute $k$-magnitudes for simplices and complexes of arbitrary $k$, we turn our attention to the computation of dynamical quantities such as center of mass and inertia.
Before we consider these specific quantities however, we briefly discuss arbitrary functions over a $k$-complex to establish some basic concepts and notation.
\subsection{Integrating Multivector-valued Functions over a Complex}

Let $f(\sigma_k)$ be a multivector-valued function defined on a complex $\mathcal{K}_k$. How the function processes the input $\sigma_k$ is up to the function, e.g. $f(\sigma_k) = \tfrac{1}{k+1}\sum_{i=0}^k v_i$ would return the center of mass of $\sigma_k$.
Then the integral of $f(\sigma_k)$ over a complex $\mathcal{K}_k$ is given by
\begin{equation}\label{eq:integral}
    F(\mathcal{K}_k) = \sum_{\sigma_k \in \mathcal{K}_k} f(\sigma_k) S({\sigma_k}) \ ,
\end{equation}
where $S({\sigma_k})$ ensures that $f(\sigma_k)$ is weighted by the signed $k$-magnitude of $\sigma_k$.
Using the same technique as in \cref{th:magnitude,th:magnitudeC} the integration can also be performed over the boundary of the $k$-complex $\mathcal{K}_k$:
\begin{align}
    F(\mathcal{K}_k) &= \sum_{\sigma_k \in \mathcal{K}_k} f(\sigma_k) S({\sigma_k}) = \sum_{\sigma_{k-1} \in \partial \mathcal{K}_k} f([\origin,\sigma_{k-1}]) (\origin \vee S({\sigma_{k-1}}))  +
    \text{ideal term} \ .
    \end{align}
We have already seen that when $f(\sigma_k) = 1$ we obtain the signed $k$-magnitude, and in the following sections we will see how the integral of $f(\sigma_k) = \tfrac{1}{k+1}\sum_{i=0}^k v_i$ leads to the center of mass (c.o.m.), and there is even a function that leads to the inertia tensor.

\subsection{Center of Mass of a Mesh/Complex}

\begin{theorem}[Center of mass]
In $k$-D PGA,  a $k$-simplicial complex $\mathcal{K}_k$, defined by
a boundary of $(k-1)$-simplices $\partial \mathcal{K}_k$, with a  uniform mass distribution,
has a center of mass $C_\text{com}$  given by
\begin{equation}\label{com}
\begin{aligned}
C_\text{com} = \frac 1 {\lVert C \rVert} \int_C \vec x dV &= \frac 1 {\lVert C \rVert (k+1)!} \sum\limits_{\partial \mathcal{K}_k} ( \sum_{i=0}^{k-1} v_i + \origin )  (S_{k-1} \vee  \origin)\\
&= \frac 1 {\lVert C \rVert (k+1)!} \sum\limits_{\partial \mathcal{K}_k} ( v_0 + \ldots + v_{k-1} + \origin )(v_0 \vee \cdots \vee v_{k-1} \vee \origin)  \\
\end{aligned}
\end{equation}
\end{theorem}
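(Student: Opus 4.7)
The plan is to specialise the integration framework from the preceding subsection to the centroid weight function $f(\sigma_k) = \tfrac{1}{k+1}\sum_{i=0}^{k}v_i$, which returns the centre-of-mass of a single simplex.

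First I would establish the interior form. For a single top-dimensional $k$-simplex $\sigma_k$ with uniform mass density, standard barycentric integration gives the moment as centroid times signed $k$-volume, $\int_{\sigma_k}\vec{x}\,dV = \tfrac{1}{(k+1)!}\bigl(\sum_{i=0}^{k}v_i\bigr)\,S_k$, where $S_k$ is a scalar in $k$-D PGA so that product order is immaterial. Summing over the complex yields the interior expression
\[
\int_{C}\vec{x}\,dV = \frac{1}{(k+1)!}\sum_{\sigma_k \in \mathcal{K}_k}\Bigl(\sum_{i=0}^{k}v_i\Bigr)S_k ,
\]
which divided by $\lVert C \rVert$ already gives a valid, non-boundary formula for $C_\text{com}$.

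Next I would convert this interior sum into a boundary sum using the identity $S_k = \origin\vee\sum_{\sigma_{k-1}\in\partial\sigma_k}S_{k-1} + (\text{ideal})$ established in the proof of \cref{th:magnitude}. This rewrites each $k$-simplex as a signed sum of origin-apex pyramids $[\origin, v_0, \ldots, v_{k-1}]$; each such pyramid has centroid $\tfrac{1}{k+1}(\origin + \sum_{i=0}^{k-1}v_i)$ and signed volume proportional to $\origin\vee S_{k-1} = \pm\, S_{k-1}\vee\origin$, again a scalar. Substituting back, the equivariance and anti-symmetry of the join cause each internal $(k-1)$-facet, shared by two consistently oriented $k$-simplices, to appear twice with opposite sign and cancel, exactly as in the proof of \cref{th:magnitudeC}. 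Only boundary contributions survive, giving
\[
\int_{C}\vec{x}\,dV = \frac{1}{(k+1)!}\sum_{\sigma_{k-1}\in\partial\mathcal{K}_k}\Bigl(\sum_{i=0}^{k-1}v_i + \origin\Bigr)\bigl(S_{k-1}\vee\origin\bigr),
\]
and dividing by $\lVert C \rVert$ yields the advertised formula.

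The main obstacle is disposing of the ideal remainder $\vec{v}_0^{*}\vee\cdots\vee\vec{v}_k^{*}$ that accompanies each $S_k$ in the decomposition. After multiplication by the centroid factor and summation over $\mathcal{K}_k$, one must argue that this remainder either vanishes identically---using $\partial\partial\mathcal{K}_k = 0$ together with the anti-symmetry of the join, in direct analogy with the internal-facet cancellation of \cref{th:magnitudeC}---or contributes only to purely ideal grades that are invisible to the finite centre-of-mass point on the left-hand side. Keeping the sign bookkeeping consistent across the alternating $(-1)^i$ of $\partial\sigma_k$ and the $(n-|A|)(n-|B|)$ parity of the join is the technically delicate step, but structurally the argument is the same as for the magnitude theorem.
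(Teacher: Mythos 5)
Your proposal is correct and follows the same route the paper intends: the paper gives no explicit proof of this theorem, relying instead on the integration framework of the preceding subsection with $f(\sigma_k)=\tfrac{1}{k+1}\sum_i v_i$, which is precisely what you specialize and flesh out (interior barycentric moment, boundary conversion via the identity from \cref{th:magnitude}, cancellation of internal facets as in \cref{th:magnitudeC}). The one step you flag as the main obstacle---the ideal remainder $\vec{v}_0^{\,*}\vee\cdots\vee\vec{v}_k^{\,*}$---resolves more simply than either of your suggested arguments: in $k$-D PGA it is the join of $k+1$ ideal points (directions) lying in a $k$-dimensional space of directions, hence it vanishes identically by linear dependence (cf.\ the $0$ entry in the tetrahedron row of \cref{tab:pga-norms}), so no cancellation or grade-projection argument is needed.
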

Where in practice, we prefer to leave out the factor $\tfrac{1}{\lVert C \rVert}$, and arrive with a homogeneous point at the center of mass scaled with the volume. In this form several such quantities can be added to find the composite c.o.m.

We illustrate this in \vref{fig:com}, where the choice of extra point produces the same area and center of mass for our pentagon.

\begin{figure}
    \centering
    \includegraphics[width=1.0\linewidth]{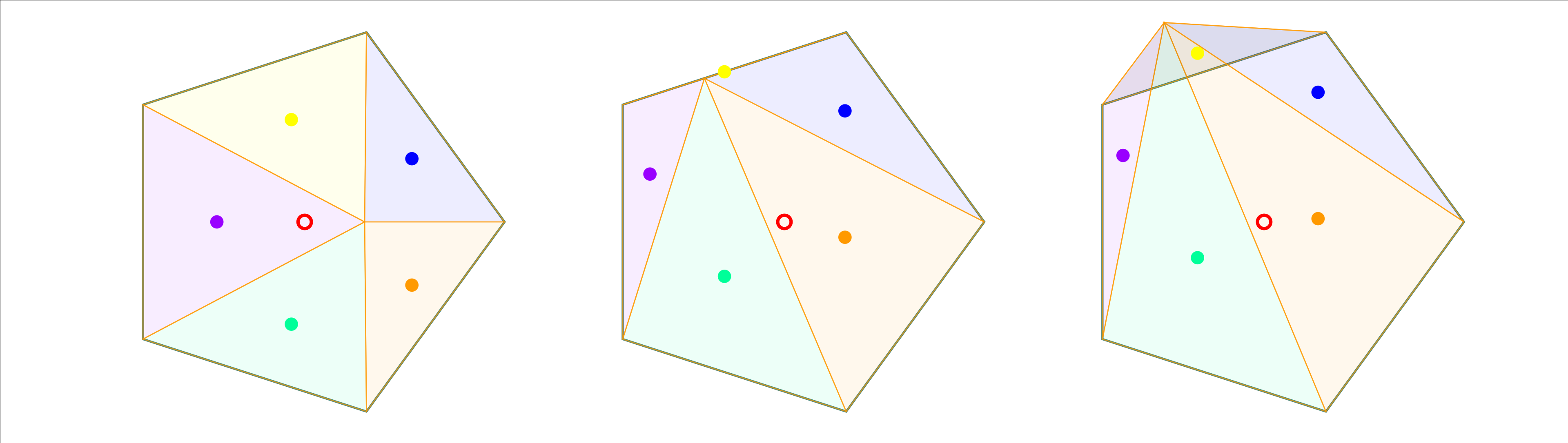}
  \caption{ The area of the pentagon is the sum of the signed areas of the triangulation w.r.t. to any point. The c.o.m. of the pentagon, drawn as red circle, is the area-weighted-average of the c.o.m. of
  the individual triangles, drawn as solid dots. In $d$-D, the triangles become pyramidal cones from the arbitrary apex point to the boundary facets, still contributing their hypervolume-weighted centroids to the centroid of the complex. But the polygon centroids are then $(d-k)$-blades, whose intersection with the $k$-blade representing the carrier plane gives a centroid point $d$-blade.}
    \label{fig:com}
\end{figure}

\subsection{Example Continued}

We continue the example above, to compute the center of mass of the sliced fuel tank. 
We can again proceed without knowing the exact location of the missing triangles by picking our $o' \in p$. We now find

\begin{equation}
    F_\text{com} = \frac{1}{24} \sum (v_0 + v_1 + v_2 + o')(v_0 \vee v_1 \vee v_2 \vee o') .
\end{equation}
The resulting homogeneous point is then located at the center of mass, and scaled with the total volume. As before a proper choice of $o'$ minimizes the computations needed, and as before the homogeneous scale ensures that the sum of the center of mass below the plane and that above equals the total center of mass. 

The total volume of the fuel can be computed from this center of mass as its Euclidean norm, hence no separate computation is required.
An interactive demonstration is available \cite{com-demo}.

\subsection{Inertia of a Mesh/Complex}

In an earlier article on kinematics and dynamics in PGA \cite{dorst-2024} we assumed an arbitrary rigid body was aligned to its principal axis and that the principal moments were known. Using those, one can then construct an inertial duality map. 
Finding that inertial frame for an arbitrary mesh was left as an exercise to the reader. Now that we have calculated the 0th moment (volume) and 1st moment (center of mass) for an arbitrary mesh, it is a natural question if we can do the same for the 2nd moment - the inertia tensor.

While the scalar volume and vector center of mass have natural representations in PGA, the inertia tensor is a rank 2 tensor that has no direct PGA equivalent. We can however still use the same techniques once we realize that this rank 2 tensor can be represented as a set of three vectors: the same vectors that would make up its matrix representation.

Using this representation we can then accumulate inertia over the boundary of our mesh as before, and directly construct the $2k$-reflection (aka {\em rotor}) required for diagonalization, enabling us to align our arbitrary mesh with its inertial frame.
This allows us to avoid using matrix representations where we would need
to convert the aligning matrix back to a motor. 
\begin{definition}[Tetrahedron Inertial Frame]
The canonical inertia around the origin for a solid tetrahedron defined by three 
points at positions $[x_i, y_i, z_i]$ and the origin can
be described by a frame of three vectors $I_1,I_2,I_3$, converted from a scalar version in the literature (e.g. \cite{tonon-2005}), can be written as
\begin{equation}\label{inertia}
\begin{gathered}
I_1 = (I_Y + I_Z) \mathbf e_1 + I_{XY} \mathbf e_2 + I_{XZ} \mathbf e_3 \\
I_2 = I_{XY} \mathbf e_1 + (I_Z + I_X)\mathbf e_2 + I_{YZ} \mathbf e_3 \\ 
I_3 = I_{XZ} \mathbf e_1 + I_{YZ} \mathbf e_2 + (I_X + I_Y)\mathbf e_3 \\
I_{X} = X\cdot(X + X')\\
I_{XY} = -X\cdot Y - \frac 1 2 (X\cdot Y' + Y\cdot X') \\
X =  x_1 \mathbf e_1 + x_2 \mathbf e_2 + x_3 \mathbf e_3, \qquad X' = x_2 \mathbf e_1 + x_3 \mathbf e_2 + x_1 \mathbf e_3 
\end{gathered}
\end{equation}
\end{definition}
\noindent
with the definitions of $\{Y, Z\}$, $\{Y', Z'\}$, $\{I_{Y}, I_{Z}\}$ and $\{I_{XZ}, I_{YZ}\}$ following those for $X$, $X'$, $I_{X}$ and $I_{XY}$ respectively. 
This inertial frame $I_i$ can now be weighted with the signed volume and accumulated for all triangles in the mesh - similar to our calculation of the center of mass.
\begin{equation}\label{meshInertia}
{I_\text{tot}}_i = 
\frac{1}{10\lVert C \rVert}\sum\limits_{\partial C} I_i(\origin \vee v_0 \vee \cdots \vee v_{k-1}) . 
\end{equation}
Such a set of vectors $I_\text{tot}$ (corresponding to a symmetric matrix) can always be diagonalized, which we will do in our framework using the Jacobi algorithm. For this we first need to define the GA equivalent of a matrix similarity transformation ($O A O^\top$) by an orthogonal matrix ($O$); in our case a {\em rotor} (normalized $2k$-reflection) applied to such a frame.
\begin{definition}[Similarity Transformation of a frame $I_i$]
Given a rotor $R$, and
a frame of vectors $I_i$, the similarity transformation which in matrix language would be written
$\mathbf{I' = RIR^{-1}}$ can be calculated in PGA as
\end{definition}
 \begin{equation}\label{similarity}
I_i' =  \sum_{j=1}^{3} R (\mathbf e_j\cdot(\tilde R \mathbf e_j R)I_j) \tilde R
  \end{equation}
This operation plays a central role in the Jacobi algorithm that we will use to
diagonalize this frame, and find the eigenframe of a simplicial complex.
\begin{definition}[Diagonalized Eigenframe]
Given a set of three symmetric vectors $I_i$, there
exists a rotor $R$ so that $I_i'$, the similarity transformation of $I$ by $R$, is a set of
eigenvalue scaled basis vectors.
\end{definition}
 \begin{equation}\label{diagonal}
I_i' =  \sum_{j=1}^{3} R (\mathbf e_j\cdot(\tilde R \mathbf e_j R)I_j) \tilde R = \lambda_i \mathbf e_i
  \end{equation}
To find the $R = R_1R_2 \cdots R_k$, we  perform a set of consecutive Givens rotations, following the
standard implementation of the Jacobi eigenvalue algorithm. We iterate over the principal
bivectors $\mathbf e_{pq}$, generating a rotor $R_k$ in the $k$-th iteration step as
\begin{equation}\label{jacobi}
\begin{aligned}
R_k &= e^{{ -\frac 1 2 \phi \mathbf e_{pq}}} \\
\phi &= \tan^{-1} \big(  \cfrac {\text{sign } \tau} {\lvert \tau \rvert + \sqrt{1 + \tau^2}}  \big) \\
\tau &= \cfrac {I_q\cdot \mathbf e_q-I_p \cdot \mathbf e_p} {2 I_p\cdot \mathbf e_q}
\end{aligned}
\end{equation}
Updating our frame $I_i$ each step with $R_k$ using \vref{similarity}. Steps where
the denominator of $\tau$ are very small are skipped as the $I_p$ vector is sufficiently
orthogonal to $\mathbf e_q$. The algorithm terminates when this denominator is sufficiently
small for all basis planes.
The reverse of the resulting rotor $R$ can then be used to align a mesh with its
eigenframe, while the resulting frame satisfies $I_i' = \lambda_i \mathbf e_i$. The
matching eigenvectors are $R\mathbf e_i\tilde R$.
This technique avoids the otherwise needed conversion from matrices to motors, and allows us to calculate the inertial bivector used in \cite{dorst-2024}, within the PGA framework.

\section{Conclusion}

The join operation in 3D PGA uses the vertices of a mesh to build an algebraic representation that compactly contains the relevant geometric information to compute various magnitudes: area, volume, center of mass, and inertia.
This construction and the subsequent formulas are coordinate-free,  equivariant under Euclidean transformations, and can be computed efficiently. The Euclidean split was our tool in developing these equations, and the resulting formulas  took the form of `norm of sums' or `sum of norms' of the Euclidean or Ideal terms of the geometrical elements. 

PGA is also very well suited to compute the physical motion of objects, and the  computations of center of mass and inertia of a mesh-given object displayed in this paper tie in perfectly with the PGA-bivector-based treatment of Newtonian mechanics in \cite{dorst-2024}. That framework unifies the translational and rotational motion equations in a manner that makes them integrable, by actually purposely ignoring the Euclidean split. We recommend it as your next read. 

In this paper, we have specialized to 3D PGA $\R_{3,0,1}$. The dimensional recursion we employed in the treatment of meshes can actually be generalized to arbitrary dimensions, and the formulas can be written in a manner that makes them dimension-agnostic (by using $\eo^*$ rather than $\origin$ for the origin, which is then automatically the $d$-dimensional Euclidean pseudoscalar). Coding in $\R_{d,0,1}$, we can thus use exactly the same 3D software to treat 2D polygon meshes and their properties, merely setting $d=2$ in the first line of code. The algebra always generates the required entities allowed in the required dimension for each operation, no more and no less. No other framework can do this.

The present paper is a harbinger; you may have noticed glimpses of the connection to Stokes' theorem in this paper which need to be developed in more detail.  We intend to produce a series of papers spelling out the advantages of PGA for Euclidean geometry and Newtonian physics, tying it to applications in CGI and robotics, with accompanying tutorials on \href{http://bivector.net}{bivector.net}. We are convinced that PGA $\R_{d,0,1}$ is the preferred framework in those fields, encompassing all that went before, in any dimensionality, in a structural and efficient manner.

\bibliographystyle{splncs04}
\bibliography{biblio}
\end{document}